\definecolor{teal}{RGB}{0, 128, 128} 
\newcommand{\framework}{{EnMCS}}
\newcommand{\singlesearchname}{{HoloSearch}}
\newcommand{\combinename}{EMerge}
\newcommand{\annot}[1]{\quad\triangleright \text{#1}}
\newcommand\vldbavailabilityurl{URL_TO_YOUR_ARTIFACTS}
\newcommand\vldbpagestyle{plain} 
\begin{document}

\title{Ensemble-based Deep Multilayer Community Search}

\author{Jianwei Wang$^{1}$, ~~ Yuehai Wang$^{2}$,~~ Kai Wang$^{3}$, ~~ Xuemin Lin$^{3}$,~~Wenjie Zhang$^{1}$,~~Ying Zhang$^{4}$}
\affiliation{\normalsize{$^{1}${The University of New South Wales, Sydney, Australia}}\\
\normalsize{$^{2}${XIDIAN University, Xi'an, China}}\\
 \normalsize{$^{3}${Antai College of Economics \& Management, Shanghai Jiao Tong University, Shanghai, China}} \\
 \normalsize{$^{4}${Zhejiang Gongshang University, Hangzhou, China}}}
\email{{jianwei.wang1, wenjie.zhang}@unsw.edu.au, yuehaiw@stu.xidian.edu.cn, {w.kai, xuemin.lin}@sjtu.edu.cn, ying.zhang@zjgsu.edu.cn}

\renewcommand{\shortauthors}{Jianwei Wang et al.}
\renewcommand{\shorttitle}{Research Data Management Track Paper}

\begin{abstract}
Multilayer graphs, consisting of multiple interconnected layers, are widely used to model diverse relationships in the real world.
A community is a cohesive subgraph that offers valuable insights for analyzing (multilayer) graphs. Recently, there has been an emerging trend focused on searching query-driven communities within the multilayer graphs.
However, existing methods for multilayer community search are either 1) rule-based, which suffer from structure inflexibility; or 2) learning-based, which rely on labeled data or fail to capture layer-specific characteristics.
To address these, we propose \textbf{\framework}, an \underline{\textbf{En}}semble-based unsupervised (i.e., label-free) \underline{\textbf{M}}ultilayer \underline{\textbf{C}}ommunity \underline{\textbf{S}}earch framework. 
\framework~contains two key components, i.e., \singlesearchname~ which identifies potential communities in each layer while integrating both layer-shared and layer-specific information, and \combinename~which is an Expectation-Maximization (EM)-based method that synthesizes the potential communities from each layer into a consensus community.
Specifically, \singlesearchname~first employs a graph-diffusion-based model that integrates three label-free loss functions to learn layer-specific and layer-shared representations for each node. Communities in each layer are then identified based on nodes that exhibit high similarity in layer-shared representations while demonstrating low similarity in layer-specific representations w.r.t. the query nodes. 
To account for the varying layer-specific characteristics of each layer when merging communities, \combinename~ models the error rates of layers and true community as latent variables. It then employs the EM algorithm to simultaneously minimize the error rates of layers and predict the final consensus community through iterative maximum likelihood estimation.
Experiments over 10 real-world datasets highlight the superiority of \framework~ in terms of both efficiency and effectiveness.

\end{abstract}

 \maketitle
\pagestyle{\vldbpagestyle}

\ifdefempty{\vldbavailabilityurl}{}{
\begingroup\small\noindent\raggedright\textbf{PVLDB Artifact Availability:}\\
The source code, data, and/or other artifacts have been made available at \url{https://github.com/guaiyoui/EnMCS}.
\endgroup
}

\section{Introduction}
\label{sec:Introduction}

Graphs are widely used to model relationships among entities in various systems, such as social networks~\cite{wasserman1994social, knoke2008social, he2024discovering, he2023scaling, li2024querying, DBLP:conf/wise/LiYZZL21}, road networks~\cite{DBLP:journals/corr/abs-2408-05432,DBLP:conf/icde/WangYC00L23} and financial networks~\cite{eboli2004systemic, heimo2009maximal}. In many real-world scenarios involving complex networks, interactions between entities often span multiple domains (e.g., different types of sports as illustrated in Figure~\ref{fig:case_example}). Multilayer graph emerges as a powerful structure for modeling such diverse relationships~\cite{hashemi2022firmcore, behrouz2022firmtruss}, where each layer represents a different type of interaction. 

Applications such as friend recommendation in social networks~\cite{chen2008combinational, cheng2011personalized} and fraud detection on e-commerce platforms~\cite{li2021happens, zhang2017hidden} require the identification of query-driven cohesive subgraphs. To handle these demands, community search (CS)\cite{wang2024efficient, wang2024neural, fang2020survey, sozio2010community}, also known as local community detection, is proposed. Given a set of query nodes, CS aims to find a query-driven subgraph, where the resulting subgraph (or community) forms a densely intra-connected structure and contains the query nodes. Recently, there is a growing interest in developing CS on the multilayer graphs~\cite{hashemi2022firmcore, behrouz2022firmtruss, luo2020local, interdonato2017local}. Multilayer CS has broad applications in real-world scenarios, including:

\noindent $\bullet$ \textit{Identifying functional regions in brain networks}~\cite{luo2020local, behrouz2022firmtruss}. 
Multilayer brain graphs from multiple individuals enable more accurate and comprehensive modeling of functional systems, as subjects may share similar visual or auditory characteristics. Given a query region, multilayer CS can identify related regions with similar functionality across different subjects.

\noindent $\bullet$ \textit{Collaboration mining}~\cite{luo2020local}. In a multilayer collaboration graph, each node represents a researcher, edges indicate collaborations, and each layer corresponds to a research domain (e.g., database, machine learning). Given a specific researcher, multilayer CS can identify related researchers with similar expertise across layers.

\noindent $\bullet$ \textit{Social friendship mining}~\cite{kim2015community}. 
A multilayer graph can more effectively model individuals across various dimensions (e.g., location, interests, demographics). Multilayer CS in social multilayer graphs uncovers intricate community structures and reveals potential connections, enhancing the mining of social friendships.

\begin{figure}
  \centering
  \includegraphics[width=0.90\linewidth]{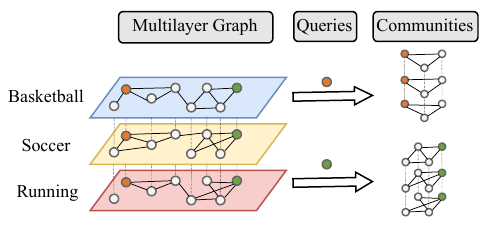}
  \vspace{-3mm}
  \caption{An example of multilayer community search. The left diagram shows a multilayer graph, while the right diagram shows the communities identified corresponding to queries indicated by the arrows.}
  \label{fig:case_example}
\vspace{-6mm}
\end{figure}

\noindent \textbf{Existing works}. Given the importance and broad applications of multilayer CS, various methods are proposed~\cite{hashemi2022firmcore, behrouz2022firmtruss, wang2024focuscore, interdonato2017local, luo2020local, behrouz2022cs}. 
Table~\ref{tab:intro_comparasion} provides an overview of representative approaches, which can be categorized into two main categories: rule-based methods~\cite{hashemi2022firmcore, behrouz2022firmtruss, wang2024focuscore, interdonato2017local} and learning-based methods~\cite{luo2020local, behrouz2022cs}.

Rule-based multilayer CS methods characterize communities by pre-defined patterns or rules. 
For instance, FirmCore~\cite{hashemi2022firmcore} and FocusCore~\cite{wang2024focuscore} adapt the concept of $k$-core~\cite{cui2014local} to guide multilayer CS. FirmCore requires that at least $\lambda$ of the layers in the multilayer graph satisfy the condition that nodes within the selected community have a degree greater than or equal to $k$. Similarly, FirmTruss~\cite{behrouz2022firmtruss} adopts the concept of $k$-truss~\cite{akbas2017truss} for multilayer CS, and ML-LCD~\cite{interdonato2017local} generalizes the cohesiveness function~\cite{andersen2006local} as the objective for multilayer CS. 
Although label-free, these methods suffer from structural inflexibility~\cite{behrouz2022cs} and fail to account for layer-specific characteristics. These fixed rules and patterns impose rigid constraints on the topological structure of communities and treat each layer identically, making it difficult for real-world communities to satisfy such inflexible constraints.

Different from rule-based methods that operate directly on the graph, learning-based methods project nodes into a latent space, where rules or objectives can be applied more flexibly and effectively. 
These methods typically first employ techniques such as random walks as in RWM~\cite{luo2020local} or multilayer graph convolutional networks (GCNs) as in CS-MLGCN~\cite{behrouz2022cs} to learn latent representations. Then, RWM uses a cohesiveness function (e.g., conductance) to identify communities, while CS-MLGCN uses ground-truth communities to train a classification model.
Although showing promise, the representations learned by RWM rely on the random walk and thus lack layer-specific characteristics and latent cohesiveness, which limits its effectiveness. Furthermore, the representations learned by CS-MLGCN rely on label information which are hard to obtain in real-world scenarios, restricting its applicability.

\noindent \textbf{Motivations}.
As summarized in Table~\ref{tab:intro_comparasion}, existing multilayer CS methods are either 1) rule-based, suffering from structural inflexibility; or 2) learning-based, relying on label information or lacking layer-specific characteristics. 
Recently, TransZero~\cite{wang2024efficient}, a learning-based single-layer CS method, is proposed.
It includes an offline pre-training phase and an online similarity-to-community phase, which frees the model from label reliance while achieving high performance.
However, TransZero is limited to single-layer graphs and cannot directly support multilayer graphs.
A direct approach to generalizing single-layer CS methods for multilayer graphs, as detailed in Algorithm~\ref{algo:naive_approach}, involves 1) searching for communities in each layer via the single-layer CS method, and 2) merging the communities into a consensus community. 
While the idea of "search and merge" has been widely applied in multilayer community detection~\cite{tagarelli2017ensemble, xu2020stacked}, the naive application of this strategy to multilayer CS lacks inter-layer complementary information in both the search stage and the merge stage, leading to unsatisfactory performance.
In this paper, we build upon and enhance this strategy, aiming to design an effective and efficient unsupervised multilayer CS method. Two main challenges exist below.

\begin{table}

 \caption{Comparisons among (multilayer) CS methods}
 \vspace{-3mm}
 \centering \scalebox{0.72}{
\begin{tabular}
{@{}lccccccc@{}}
\toprule[1.2pt]
Approaches  & \makecell[c]{Support \\ Multilayer?} & \makecell[c]{Layer-specific \\ Characteristics?}  &  \makecell[c]{Structure \\ Flexibility?}    & \makecell[c]{Latent \\ Cohesiveness?}    & \makecell[c]{Label \\ Free?} \\ \midrule
FirmCore~\cite{hashemi2022firmcore} & \Checkmark     & \XSolidBrush &   \XSolidBrush   & \XSolidBrush &\Checkmark        \\
FocusCore~\cite{wang2024focuscore} & \Checkmark     & \XSolidBrush &   \XSolidBrush   & \XSolidBrush &\Checkmark         \\
FirmTruss ~\cite{behrouz2022firmtruss}   & \Checkmark      & \XSolidBrush & \XSolidBrush       & \XSolidBrush  &   \Checkmark    \\
ML-LCD~\cite{interdonato2017local} &\Checkmark    & \XSolidBrush & \XSolidBrush & \XSolidBrush & \Checkmark    \\
RWM~\cite{luo2020local} & \Checkmark    & \XSolidBrush & \Checkmark & \XSolidBrush &  \Checkmark   \\
CS-MLGCN~\cite{behrouz2022cs}   &  \Checkmark  & \Checkmark & \Checkmark    & \Checkmark & \XSolidBrush     \\
TransZero~\cite{wang2024efficient}   & \XSolidBrush    & \XSolidBrush  & \Checkmark & \Checkmark & \Checkmark        \\
 
\midrule
\framework~(Ours)  & \Checkmark & \Checkmark & \makecell[c]{\Checkmark} & \Checkmark & \Checkmark\\ \bottomrule
\end{tabular}}

 \vspace{-3mm}
 \label{tab:intro_comparasion}
\end{table}
\textit{Challenge I: How to efficiently and effectively search communities in each layer while incorporating both inter-layer and intra-layer information}. 
The lack of ground-truth communities makes the search process challenging.
One direct approach is to leverage a global graph neural network (GNN), such as the multilayer GCNs in CS-MLGCN~\cite{behrouz2022cs}, to learn to predict community. However, training such an end-to-end model needs a specific optimization objective, e.g., the label used in CS-MLGCN.  
An alternative approach involves first learning representations by some unsupervised frameworks~\cite{zhu2022structure, wang2022collaborative}, followed by community search. However, the majority of existing representation learning frameworks for multilayer graphs prioritize global properties~\cite{zhu2022structure, wang2022collaborative}, overlooking layer-specific characteristics. Additionally, these frameworks usually rely on GCNs as their backbone, which primarily operate on local neighborhoods. This choice not only leads to low computational efficiency but also limits the receptive field. Therefore, it is challenging to search communities while incorporating multilayer information without labels.

\textit{Challenge II: How to effectively merge the communities across layers into a consensus community without labels}. The intricate inter-layer relationships and the lack of labels complicate the merging process. 
A direct approach is to use a majority voting strategy or weighted voting, where communities from each layer "vote" on node assignments to form a consensus community. 
However, a naive voting strategy lacks layer-specific characteristics, and without labels, determining accurate weights for voting is potentially unreliable.
Another promising direction is consensus clustering~\cite{tagarelli2017ensemble, hao2023ensemble,topchy2004mixture,wang2022stable} to ensemble communities. However, as detailed in Algorithm~\ref{algo:naive_approach} and in the experiments, consensus clustering, such as the co-association-based~\cite{tagarelli2017ensemble, hao2023ensemble} and the mixture-ensemble-based~\cite{topchy2004mixture, wang2022stable} methods, usually ensemble clusterings from multiple clustering algorithms on the same dataset and rely on the assumption of independence among clustering outcomes~\cite{topchy2004mixture}. In contrast, the communities from different layers have intricate inter-layer correlations. 
Thus, it is challenging to merge the communities without labels.

\noindent\textbf{Our solutions}. To mitigate the above challenges, we propose a new \underline{\textbf{En}}semble-based unsupervised \underline{\textbf{M}}ultilayer \underline{\textbf{C}}ommunity \underline{\textbf{S}}earch framework, termed \framework. \framework~comprises two key components: \singlesearchname, designed to handle Challenge I by proposing a graph diffusion encoder and inter-layer similarity-to-community module, and \combinename, aimed at solving Challenge II by incorporating layer-specific characteristics as latent variables and merging the communities using Expectation-Maximization (EM) algorithm.

In \singlesearchname, we first design a graph diffusion encoder to efficiently and effectively learn both layer-shared and layer-specific representations across layers. The learned representations are then used for inter-layer similar-to-community to identify communities. 
Specifically, the graph diffusion encoder employs a diffusion process~\cite{kondor2002diffusion, gasteiger2019diffusion} to aggregate information. By first applying diffusion process and then feedforward process, it efficiently and effectively leverages both global and local features to learn representations.
To train the graph diffusion encoder, we introduce three label-free loss functions following~\cite{mo2023disentangled, wang2024efficient}: inter-layer loss, intra-layer loss and proximity loss. The inter-layer loss ensures inter-layer consistency by aligning shared patterns across layers, and the intra-layer loss enhances discriminability by emphasizing layer-specific characteristics. The proximity loss promotes cohesiveness in the latent space to encourage closer representations of similar nodes. 
Then, the inter-layer similarity-to-community module identifies communities by selecting nodes with high similarity in layer-shared representations and low similarity in layer-specific representations w.r.t. the query.

In \combinename, rather than explicitly assigning a weight to each layer, we model the true community and layer-specific uncertainties, i.e., error rates of each layer as latent variables. Using the EM algorithm, we simultaneously estimate both the true community and the error rates through iterative optimization motivated by ~\cite{dawid1979maximum}. Specifically, \combinename~iteratively applies two main steps: the E-step and the M-step. In the E-step, given the current estimates of error rates and the category prior (i.e., probability of each node belonging to each category), we update the community memberships. In the M-step, we use maximum likelihood estimation to update the estimated error rate and category priors based on the estimated community memberships. This process is repeated iteratively until convergence, yielding a consensus community that integrates information across layers while accounting for layer-specific uncertainties, thereby achieving high accuracy. Furthermore, \combinename~does not rely on labeled data to merge communities, enhancing its applicability.

This design of independent search and merge allows for flexible updates of either \singlesearchname~or \combinename~to achieve better performance.
Moreover, this design provides a new perspective to generalize single-layer CS methods to multilayer settings by first applying the single-layer CS to each layer and then merging with \combinename. 
Additionally, the whole framework enjoys several desired properties, including 1) flexible layer participation. It allows layers to be included or excluded as needed; 2) independence from the order of layers. The order in which layers are processed does not impact the final results; 3) overlapping communities. The \singlesearchname~ naturally supports the search of overlapping communities, where nodes can belong to multiple communities simultaneously.

\vspace{1mm}
\noindent \textbf{Contributions.}
The main contributions are as follows:
\begin{itemize}[leftmargin=10pt, topsep=1pt]
\item{} We propose a new ensemble-based multilayer CS framework \framework, which operates without the need for labels. It consists of two core components: \singlesearchname~and \combinename.
\item{} \singlesearchname~ searches communities in each layer without labels. It uses a graph diffusion encoder to learn both layer-shared and layer-specific representations and identifies communities based on the inter-layer similarity-to-community module.
\item{} \combinename~obtains the consensus community by merging the community results from each layer using the EM algorithm. It iteratively estimates both the layer-specific error rate and community memberships with maximum likelihood estimation.
\item{} Experiments on 10 public datasets demonstrate the superiority of \framework~in terms of both accuracy and efficiency. Regarding accuracy, \framework~achieves an average of 8.78\%-52.18\% improvement in the F1-score compared to the existing multilayer CS methods. Additionally, \framework~offers competitive efficiency in both the searching and merging phases.
\end{itemize}

\section{Preliminaries}
\label{sec:Preliminary}

\begin{table}[t]
\centering 
\caption{Symbols and Descriptions}
\vspace{-0.4cm}
\label{tab:symbol}
\begin{tabular}{|p{2.6cm}|p{5.4cm}|}
\hline
\cellcolor{lightgray}\textbf{Notation} & \cellcolor{lightgray}\textbf{Description} \\ \hline\hline
$r_{max}, n$&the number of layers/nodes \\ \hline
$\mathcal{V}, \mathcal{E}$&the layers/edge set of all layers\\ \hline
$\mathcal{X}, \mathcal{A}$&feature/adjacency matrix set of all layers\\ \hline
$\mathcal{G}^{(r)}=\{\mathcal{V},\mathcal{E}^{(r)}\}$& the $r$-th layer of the multilayer graph\\ \hline
$X^{(r)}, A^{(r)}$& feature matrix and adjacency matrix of the $r$-th layer\\ \hline
$C^{(r)}, P^{(r)}$& layer-shared representation and layer-specific representation of the $r$-th layer\\ \hline
$U \in \mathbb{R}^{n \times f_d} $& the common variable \\ \hline
$q=\mathcal{V}_q$& the query with node set $\mathcal{V}_q$\\ \hline
$\mathcal{G}_q, \tilde{\mathcal{G}}_q$& ground-truth/predicted  community of $q$\\ \hline
$\mathcal{M}^{\theta}(\cdot)$ & neural network model with parameters $\theta$ \\ \hline
$S^{(r)}\in \mathbb{R}^{N}$& community score vector in $r$-th layer \\ \hline
$\mathcal{L}(\cdot)$ & training objective function \\ \hline

\end{tabular}
\vspace{-5mm}
\end{table}

In this section, we give relevant preliminaries and the introduction of state-of-the-art models for multilayer CS. 
\vspace{-3mm}
\subsection{Problem Statement}
We follow the common definition of multilayer graphs~\cite{behrouz2022firmtruss, luo2020local, interdonato2017local, mo2023disentangled}.
Let $\mathcal{G} = \{\mathcal{V}, \mathcal{E}\} = \{\mathcal{G}^{(1)}, \mathcal{G}^{(2)}, \ldots, \mathcal{G}^{(r_{max})}\}$ denote the multilayer graph, where $\mathcal{G}^{(r)} = \{\mathcal{V}, \mathcal{E}^{(r)}\}$ is the $r$-th layer of the multilayer graph, $r_{max}$ is the number of layers and $1\leq r\leq r_{max}$. $\mathcal{V} = \{v_1, v_2, \ldots, v_n\}$ and $\mathcal{E} = \{\mathcal{E}^{(1)}, \mathcal{E}^{(2)}, \ldots, \mathcal{E}^{(r_{max})}\}$ represent the node set and the edge set of all layers, respectively.  
$X^{(r)} \in \mathbb{R}^{n \times f}$ denotes original node features of the $r$-th layer with $n$ and $f$ characterizing the number of nodes and the dimension of node features, respectively.
$\mathcal{E}^{(r)}$ represent the edge set of the $r$-th layer. $A^{(r)} \in \mathbb{R}^{n \times n} $ denotes the graph structure of the $r$-th layer, where $A_{ij}^{(r)} = 1 $ iff $e^{(r)}_{ij} = (v_i, v_j) \in \mathcal{E}^{(r)} $. 
The proposed \framework~ learns the layer-shared representations $C^{(r)} \in \mathbb{R}^{n \times {f_h}} $ and layer-specific representations $P^{(r)} \in \mathbb{R}^{n \times {f_h}} $ through a common variable $U \in \mathbb{R}^{n \times {f_h}} $ in $r$-th layer, where ${f_h}$ is the dimension of features in the hidden space.
$S\in \mathbb{R}^{n}$ is used to denote the community scores which reflect the membership of each node.
We use $q$ and $\mathcal{V}_q$ interchangeably to denote the query node set. $\mathcal{G}_q$ and $\tilde{\mathcal{G}}_q$ are utilized to denote the ground-truth community and the predicted community {w.r.t.} $q$. Next, we give the formal definition of multilayer CS.

\begin{definition}{(Multilayer Community Search~\cite{behrouz2022firmtruss,luo2020local, interdonato2017local}).} 
Given a multilayer graph $\mathcal{G} = \{\mathcal{V}, \mathcal{E}\}$ and query $q$, the task of multilayer community search aims to identify a query-dependent connected subgraph (i.e., community) where nodes in the found community are densely intra-connected across layers.

\end{definition}

\vspace{-4mm}
\subsection{State-of-the-art Methods}
\framework~is closely related to both multilayer and single-layer CS approaches. To better highlight the differences between our proposed method and existing techniques, we provide a detailed introduction of CS-MLGCN\cite{behrouz2022cs}, a leading multilayer CS method, and TransZero, a state-of-the-art single-layer CS method.

\noindent \textbf{CS-MLGCN}. 

The model in CS-MLGCN, denoted as $\mathcal{M}^\theta(\cdot)$, produces a community score vector $S_q = \mathcal{M}^\theta(\mathcal{V}_q, \mathcal{G})\in \mathbb{R}^{n}$, which represents the membership likelihood of each node belonging to the predicted community {w.r.t.} the query $\mathcal{V}_q$. 
It consists of a query encoder and a view encoder which are based on GCNs to aggregate information in each layer and use the attention mechanism to combine information across layers.
To train the model, the Binary Cross Entropy (BCE) loss is used to measure the error between the predicted community score $S_q$ and the ground-truth vector $Y_q$, where $Y_{q,j} = 1$ if and only if node $v_j \in \mathcal{G}_q$. The $j$-th bit of $Y_q$, denoted as $Y_{q,j}$, indicates whether node $v_j$ is part of the community.
\begin{equation*} \mathcal{L}_q = \frac{1}{n} \sum_{i=1}^{n} -\left( Y_{q,i} \log S_{q,i} + \left( 1 - Y_{q,i} \right) \log \left( 1 - S_{q,i} \right) \right) \end{equation*}

After obtaining the community score, the learned model, along with the label-based threshold, is used to locate the final community. Specifically, CS-MLGCN calculates the community score through model inference and follows the Constrained Breadth-First Search algorithm from ~\cite{jiang2022query} for community identification. It expands outward by selecting neighbors with a community score greater than the label-based threshold.

However, CS-MLGCN relies on labels for both learning community scores and determining the threshold for community identification. Furthermore, as the model is trained on pre-selected communities, it tends to memorize the training data rather than generalize, making it difficult to predict unseen communities.

\noindent\textbf{TransZero}.
TransZero frees the model from label information by introducing a two-phase strategy: the offline pre-train phase and the online similarity-to-community phase.
In the offline phase, it pre-trains the community search graph transformer (CSGphormer) based on two label-free objectives, i.e., the personalization loss and the link loss, to train the CSGphormer and encode the graph topology and community information into the latent space.
The personalization loss aims to bring the representation of the center node and its corresponding community closer together and push away the representation of the center node from the communities of other nodes. 
The link loss aims to enhance the similarity between the representations of neighboring nodes.
In the online phase, it computes the community score based on the similarity of representations and then identifies the community with a new modularity-like cohesiveness function.

However, TransZero is designed for single-layer CS and cannot be directly applied to multilayer graphs. Its search process fails to incorporate inter-layer complementary information, which is crucial for effective multilayer CS.

\noindent\textbf{TransZero-CC}. Here, we introduce the details of TransZero-CC which generalizes TransZero to the multilayer setting by the co-association-based consensus clustering~\cite{tagarelli2017ensemble}. More methods, including TransZero-Vote which is based on voting strategy and TransZero-ME which is based on the mixture of ensemble~\cite{topchy2004mixture, wang2022stable}, are evaluated in Section~\ref{sec:Experiment}.
The overall strategy follows a "search and merge" strategy, where communities are first identified in each layer using the single-layer CS method, and then merged into a single consensus community.
The full algorithm is outlined in Algorithm~\ref{algo:naive_approach}. Given a multilayer graph, query nodes and a single-layer CS method as input, TransZero-CC first searches for communities in each layer by the single-layer method (Lines 1 to 3). 
Then, in the merge phase, it first constructs a co-association matrix $B$ based on the found communities. The value of $B$ is assigned to the number of layers in which the corresponding nodes are assigned the same community and then normalized by the number of layers (Lines 4 to 7).
The obtained co-association matrix is utilized as the edge weight of the new graph with threshold $\mu$, and the single-layer CS method is applied to the new constructed graph (Line 8). Finally, the identified community is returned (Line 10).

However, while this simple approach can generalize either single-layer CS, it fails to incorporate inter-layer complementary information during both the search and merge phases, resulting in limited performance. Our \framework~follows this strategy and improves both the search phase and the merge phase by proposing \singlesearchname~and \combinename, respectively.

\begin{algorithm}[t]
\captionsetup{singlelinecheck=false} 
\captionsetup{margin={0pt,5em}}
\caption{\parbox{\linewidth}{TransZero-CC}}
\label{algo:naive_approach}
\LinesNumbered
\DontPrintSemicolon
\KwIn{The multilayer graph $\mathcal{G}$, query $\mathcal{V}_q$, single layer CS model TransZero $\mathcal{M}^\theta(\cdot)$, thereshold $\mu$.}
\KwOut{The identified community $\Tilde{\mathcal{G}}_q$.}

\tcp*[l]{%
  \begin{tabular}{@{}l@{\hspace{0em}}r@{}}
    {\color{teal}\text{Search Phase.}} & {\color{teal}$\annot{Propose \textbf{\singlesearchname}~in \framework}$}
  \end{tabular}
}
\For{each $\mathcal{G}^{(r)} \in \mathcal{G}$}{
    $\mathcal{M}^{\theta} \leftarrow \text{Offline-Pre-train}(\mathcal{V}_q, \mathcal{G}^{(r)})$\;
    $\Tilde{\mathcal{G}}_q^{(r)} \leftarrow \text{Online-Search}(\mathcal{G}^{(r)}, \mathcal{M}^{\theta})$
}

\tcp*[l]{%
  \begin{tabular}{@{}l@{\hspace{0em}}r@{}}
    {\color{teal}\text{Merge Phase.}} & {\color{teal}$\annot{Propose \textbf{\combinename}~in \framework}$}
  \end{tabular}
}

$B \in \mathbb{R}^{n\times n} \leftarrow\text{Initialize co-association matrix} $\;
    \For{$i,j \leq n$}{
        $b_{ij} = \{r|\exists \Tilde{\mathcal{G}}_q^{(r)} \wedge (v_i, v_j)\in \mathcal{E}^{(r)}\}$\; 
        $B(i, j) = \frac{|b_{ij}|}{r_{max}}$\;
    }

$\Tilde{\mathcal{G}}_q\leftarrow \text{Apply TransZero on new graph } (\mathcal{V}, B) $ with $\mu$\;

\Return $\Tilde{\mathcal{G}}_q$\;

\end{algorithm}

\begin{figure*}
  \centering
  \includegraphics[width=0.995\linewidth]{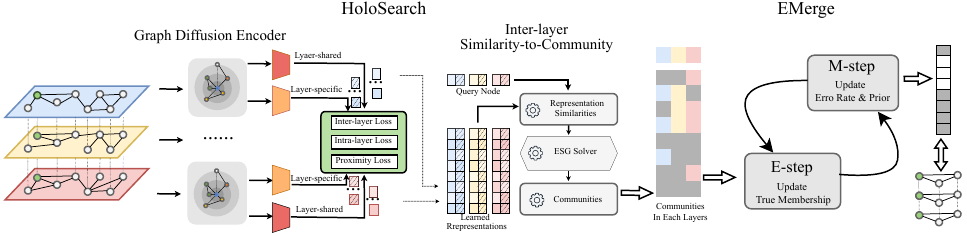}
  \vspace{-3mm}
  \caption{Illustration of the overall \framework~framework. }
  \label{fig:overview}
\vspace{-4mm}
\end{figure*}

\section{Overview}
\label{sec:Overview}

In this section, we provide an overview of \framework, which is designed for the multilayer CS. The overall architecture of \framework~is illustrated in Figure~\ref{fig:overview}. As shown in the figure, \framework~consists of two key components: \singlesearchname~and \combinename.
In the \singlesearchname~component, we propose a graph diffusion encoder that efficiently aggregates global and local information to learn both layer-shared and layer-specific representations. The training of the model is guided by three label-free loss functions: proximity loss to encourage closer representations of similar nodes, inter-layer loss to ensure inter-layer consistency and intra-layer loss to enhance discriminability.  
The learned representations are then used for inter-layer similarity-to-community to identify communities.
The \combinename~component is designed to merge the communities from each layer while accounting for the layer-specific properties of each layer using an EM algorithm. It consists of both the E-step and the M-step and iteratively applies these two steps to estimate both the true node assignments and error rates for each layer.
The details of \singlesearchname~are discussed in Section~\ref{sec:singlesearch}, and \combinename~ is detailed in Section~\ref{sec:EMerge}.

\vspace{-2mm}
\section{\singlesearchname}
\label{sec:singlesearch}

\framework~focuses on unsupervised multilayer CS, adopting a "search and merge" strategy through the introduction of \singlesearchname~and \combinename, respectively. Specifically, \singlesearchname~is designed to efficiently and effectively search for communities within each layer of a multilayer graph while leveraging both inter-layer and intra-layer information. This is achieved by integrating three key modules: 
1) the graph diffusion encoder that efficiently aggregates global-local information within each layer to learn both layer-shared representations and layer-specific representations; 
2) three label-free training objectives, including inter-layer loss, intra-layer loss and proximity loss, which aim to guide the model to encode cohesiveness priors and inter- and intra-layer correlations into the learned latent representations. These label-free objectives guide the model in capturing patterns critical for enhanced CS; 
3) the inter-layer similarity-to-community module that combines both layer-shared and layer-specific representations to efficiently and effectively identify the communities in each layer.

\vspace{-2mm}
\subsection{Graph Diffusion Encoder}
\noindent\textbf{Motivation.} 
To propagate information in the multilayer graph, a common approach in multilayer CS and multilayer learning is the use of graph neural networks, such as the GCN employed in CS-MLGCN~\cite{behrouz2022cs}. However, traditional GNNs heavily rely on local neighborhood information for message passing, often overlooking the broader global context. 
Moreover, traditional GNNs face efficiency and scalability challenges, particularly with large-scale multilayer graphs. While stacking multiple graph convolutional layers can expand the receptive field to improve accuracy, it often comes at the cost of computational bottlenecks and reduced scalability.
Additionally, nodes in multilayer graphs, influenced by global patterns spanning multiple layers and unique characteristics specific to each individual layer, are characterized by two complementary types of information: layer-shared information, which captures commonalities across layers, and layer-specific information, which captures features unique to each layer.
The integration of these dual representations enables a more holistic understanding of the relationships both across layers and within each layer.
The comprehensive integration of layer-shared representation and layer-specific representation allows the model to effectively integrate complementary information across layers, improving the accuracy and robustness of multilayer CS.

To more efficiently aggregate information, we introduce the graph diffusion encoder. 
The graph diffusion encoder comprises two main processes: the graph diffusion process and the feedforward process. 
The graph diffusion process leverages a diffusion mechanism to integrate features from both local neighbors and distant nodes. This approach enables a more comprehensive representation of the graph structure when learning both layer-shared and layer-specific representations.
The feedforward process utilizes a neural network to refine the aggregated features, enhancing their expressiveness and capturing non-linear relationships among the nodes.
The diffusion process is efficient, while the feedforward process is highly expressive. By first applying the graph diffusion process followed by the feedforward process, the encoder addresses the efficiency limitations of traditional GCNs while enhancing effectiveness for large-scale multilayer CS.

Specifically, the graph diffusion process utilizes the graph diffusion~\cite{gasteiger2019diffusion} to propagate features in each layer, as described below:

\begin{equation}
\begin{aligned}
     &O_r = \hat{\Delta}_r^{-\frac{1}{2}} \hat{A}^{(r)} \hat{\Delta}_r^{-\frac{1}{2}}, \\
     &O^*_r = \sum^{\infty}_{k=1}\theta_k O^k_r, \\
     &H^{(r)} = \hat{\Delta}_r^{-1} O^*_r \hat{X}^{(r)},
\end{aligned}  
\end{equation}
where $\hat{A}^{(r)} = \omega_{loop}I^{(r)}_N + A^{(r)} \in \mathbb{R}^{n \times n}$ is the augmented adjacency matrix, and $\omega_{loop}$ denotes the weight of self-loops which is set to 1 according to ~\cite{gasteiger2019diffusion}. 
$\hat{\Delta}_r$ is the degree matrix of $\hat{A}^{(r)}$, and \textit{k} denotes the \textit{k}-th diffusion. $\hat{\Delta}_r^{-1}$ is used to normalize the out-degree distribution of nodes, ensuring that the total amount of information that each node transmits to others remains consistent. 
For $\theta_k$, we employ the heat kernel ~\cite{kondor2002diffusion} to ensure the convergence of $O^*_r$. Specifically $\theta_k = e^{-t}\frac{t^k}{k!}$, where $t$ is used to denote the diffusion time. Following~\cite{gasteiger2019diffusion}, the diffusion process for a node terminates when its corresponding value in $\theta$ falls below the threshold of $0.0001$. Thus, the process does not involve infinite iterations but instead completes within a finite number of steps.

The feedforward process employs two separate Feedforward Networks (FFNs) with unshared parameters to map the node representations $H^{(r)}$ of each layer into layer-shared and layer-specific representations (i.e., $C^{(r)} \in \mathbb{R}^{n \times {f_h}}$ and $P^{(r)} \in \mathbb{R}^{n \times {f_h}}$ where ${f_h}$ is the dimension of the hidden representation) as follows. The FFNs~\cite{wang2024efficient} consist of an input layer, one or more hidden layers (one in our model) and an output layer, with each layer composed of neurons connected to all nodes in the adjacent layers.

\begin{equation}
\begin{aligned}
     &C^{(r)} = \text{FFN}(H^{(r)}) \\
     &P^{(r)} = \text{FFN}(H^{(r)})
\end{aligned}  
\end{equation}
The graph diffusion encoder directly models long-range dependencies in a single step, avoiding the need for multiple stacked layers that other graph convolutional networks require to propagate information across distant nodes, and thus achieves a faster speed.
Unlike traditional encoders that rely solely on direct node connections, the graph diffusion encoder diffuses information across multiple hops, balancing the influence of local neighborhood structure and global graph topology. This approach allows the model to more effectively capture long-range dependencies, leading to improved performance in the task of multilayer CS.

\begin{lemma}
\label{lemma: heat_kernel}
The heat kernel ensures the convergence.
\end{lemma}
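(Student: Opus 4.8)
The plan is to show that the matrix series $O^*_r = \sum_{k=1}^{\infty} \theta_k O^k_r$ converges in operator norm precisely because the weights are the heat-kernel coefficients $\theta_k = e^{-t}\frac{t^k}{k!}$. The central observation is that the propagation matrix $O_r = \hat{\Delta}_r^{-\frac{1}{2}} \hat{A}^{(r)} \hat{\Delta}_r^{-\frac{1}{2}}$ is the symmetric normalized adjacency matrix of the self-loop-augmented layer, hence a real symmetric matrix whose spectrum is tightly controlled. Once the spectral radius of $O_r$ is bounded by $1$, the heat-kernel weighting dominates the series by the (everywhere convergent) scalar exponential series, and convergence follows.

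First I would establish the spectral bound $\|O_r\|_2 \le 1$. Since $\hat{A}^{(r)} = \omega_{loop} I + A^{(r)}$ is symmetric with non-negative entries and $\hat{\Delta}_r$ is its diagonal row-sum matrix, $O_r$ is similar to the row-stochastic transition matrix $\hat{\Delta}_r^{-1}\hat{A}^{(r)}$ via $O_r = \hat{\Delta}_r^{\frac{1}{2}}\big(\hat{\Delta}_r^{-1}\hat{A}^{(r)}\big)\hat{\Delta}_r^{-\frac{1}{2}}$. A stochastic matrix has spectral radius exactly $1$, so $O_r$ shares this spectral radius; and because $O_r$ is real symmetric, its eigenvalues $\lambda_1,\dots,\lambda_n$ are real and therefore lie in $[-1,1]$. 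Hence $\|O_r\|_2 = \max_i |\lambda_i| \le 1$.

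With this bound in hand, convergence follows by the absolute-convergence (Weierstrass majorant) test in the Banach space of $n\times n$ matrices under $\|\cdot\|_2$. Using submultiplicativity of the norm, the tail of the partial sums satisfies
\[
\Big\| \sum_{k=m}^{M} \theta_k O^k_r \Big\|_2 \le \sum_{k=m}^{M} \theta_k \|O_r\|_2^{k} \le \sum_{k=m}^{M} \theta_k .
\]
Since $\sum_{k=1}^{\infty}\theta_k = e^{-t}\big(e^{t}-1\big) = 1 - e^{-t} < \infty$, the scalar tails vanish as $m \to \infty$, so the partial sums form a Cauchy sequence and converge. Equivalently, diagonalizing $O_r = \sum_i \lambda_i u_i u_i^{\top}$ identifies the limit in closed form,
\[
O^*_r = \sum_i e^{-t}\big(e^{t\lambda_i}-1\big)\, u_i u_i^{\top} = e^{-t(I - O_r)} - e^{-t} I ,
\]
which is manifestly well-defined because the matrix exponential converges for every finite $t$.

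The main obstacle is the spectral bound $\|O_r\|_2 \le 1$; everything afterward is a routine application of the exponential series. I expect this step to require the most care because the estimate must account for the added self-loops and must confirm that the symmetric normalization keeps \emph{all} eigenvalues within $[-1,1]$, rather than merely bounding the largest one. Once $\rho(O_r)\le 1$ is secured, the heat-kernel weighting makes the matrix series dominated by an absolutely convergent scalar series, delivering the claim for any diffusion time $t>0$.
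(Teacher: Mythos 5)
Your proof is correct, but it takes a genuinely different---and in fact stronger---route than the paper's. The paper's own proof never touches the matrix $O_r$ at all: it applies Stirling's approximation to show only that the scalar coefficients $\theta_k = e^{-t}\frac{t^k}{k!}$ tend to $0$ as $k \to \infty$ (and only for $t \in (1,\infty)$), which is what justifies the implementation detail of truncating the diffusion once $\theta_k$ drops below a threshold. Strictly speaking, coefficient decay is a necessary but not sufficient condition for convergence of $\sum_k \theta_k O_r^k$, so the paper's argument leaves exactly the gap your proposal closes. You instead bound the spectrum of $O_r$---via similarity to the row-stochastic matrix $\hat{\Delta}_r^{-1}\hat{A}^{(r)}$ together with symmetry, giving $\|O_r\|_2 \le 1$---and then dominate the matrix series by the convergent scalar series $\sum_{k\ge 1}\theta_k = 1 - e^{-t}$, obtaining norm convergence of $O_r^*$ for every $t>0$, plus the closed form $e^{-t(I-O_r)} - e^{-t}I$. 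What the paper's route buys is a direct justification of the early-termination heuristic used in practice; what yours buys is an actual proof of the lemma as stated (convergence of $O_r^*$), valid for all diffusion times and with the limit identified explicitly.
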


\noindent\textbf{Proof of Lemma 1.}
For $\theta_k = e^{-t}\frac{t^k}{k!}, t \in \mathbb{R}^*$ is a hyper-parameter, $k \in \mathbb{N}^*$, our goal is to prove that as $k \rightarrow \infty, \theta_k \rightarrow 0$. Obviously $e^{-t}$ is a constant, so we can simplify the problem to as $k \rightarrow \infty, \frac{t^k}{k!} \rightarrow 0$. First, we apply Stirling's Approximation ~\cite{stirling1730methodus}, Stirling's approximation is an approximation for factorials, expressed as: $x! \approx \sqrt{2 \pi x} \left( \frac{x}{e} \right)^x$, which becomes increasingly accurate as $x$ approaches infinity. Second, substituting Stirling's approximation into our target expression, we obtain
$\frac{t^k}{k!} \approx \frac{t^k}{\sqrt{2 \pi k} \left( \frac{k}{t} \right)^k} = \frac{t^k t^k}{\sqrt{2 \pi k} k^k} = \frac{t^{2k}}{\sqrt{2 \pi k} k^k}$. Third, to analyze the expression more easily, we take the logarithm 
$L = \log_t \frac{t^{2k}}{\sqrt{2 \pi k} k^k} = 2k - \frac{1}{2} \log_t (2 \pi k) - k \log_t k$. Fourth, we calculate the limit of $L$, in the case of $t \in (1, \infty)$, as $k \rightarrow \infty$,
$L \sim 2k - k \log_t k \; (\text{neglecting smaller terms})$. Thus, as $k \rightarrow \infty$,
$L \sim 2k - k \log_t k \to -\infty$. This means that 
$\frac{t^{2k}}{\sqrt{2 \pi k} k^k} = t^L \to 0$. Therefore, we can conclude that in the case of $t \in (1, \infty)$, as $k \rightarrow \infty, \theta_k \rightarrow 0$.

\subsection{Training Objectives}
\label{sec: training_objective}
\noindent\textbf{Motivation.} 
In the context of unsupervised multilayer CS, we design three label-free training objectives that capture the cohesive prior and inter- and intra-layer information in the latent space following~\cite{wang2024efficient, mo2023disentangled, wang2024disensemi}: 1) Inter-layer loss $\mathcal{L}_{inter}$ to encourage layer-shared representations across layers to be as similar as possible; 2) Intra-layer loss $\mathcal{L}_{intra}$ to ensure the layer-shared and layer-specific representations of the same node are as distinct as possible in each layer; 3) Proximity loss $L_{p}$ to model the relationship between nodes and their communities by pulling the representation of a central node closer to its neighbors and pushing it away from other nodes. It is based on the observation that nodes rely on their communities to learn their unique representations in the graph.

\noindent\textbf{Inter-layer loss}. The inter-layer loss is computed as follows:

\begin{equation}
\begin{aligned}
\label{eqn:matching_loss}
    &{U} = \textrm{FFN}(\textrm{CONCAT}({C}^{(1)}, {C}^{(2)}, \ldots, {C}^{(r_{max})})), \\
    &\mathcal{L}_{inter} = \frac{1}{n} \sum_{r =1}^{r_{max}} \sum_{v \in \mathcal{V}} ({C}^{(r)}_v - {U}_v)^2
\end{aligned}
\end{equation}
Here, $\textrm{CONCAT}$ is an operator that concatenates the inputs, and $U \in \mathbb{R}^{n \times {f_h}}$ represents a shared latent space that aggregates the layer-shared information from each layer. The inter-layer loss $\mathcal{L}_{inter}$, as defined in Eqn~\ref{eqn:matching_loss}, encourages the alignment of layer-shared representations across different layers by minimizing the discrepancy between each representation and $U$. As training progresses, the model aims to converge toward a consistent representation across layers, i.e., ${C}^{(1)} \approx \ldots \approx {U} \approx \ldots \approx {C}^{(r_{max})}$, effectively capturing the shared structural patterns and interdependencies present in the multilayer graph. This consistency ensures that the learned representations generalize well across layers, encoding inter-layer relationships.

\noindent\textbf{Intra-layer loss}. 
To decouple the layer-shared and layer-specific representations, it is essential to enforce statistical independence between them. Specifically, when the layer-shared and layer-specific representations are statistically independent, the following constraint holds for the representations: 
$ \mathbb{E}[\phi^{(r)}({C}^{(r)}) \psi^{(r)}({P}^{(r)})] = \mathbb{E}[\phi^{(r)}({C}^{(r)})] \mathbb{E}[\psi^{(r)}({P^{(r)}})] $, and vice versa, where $\phi^{(r)}$ and $\psi^{(r)}$ are a linear layer, as demonstrated in~\cite{gretton2005kernel}, to project the representations into the same hidden space. This property implies that the statistical interactions between common and private representations are minimized when their correlation is minimized.

To achieve this decoupling, we minimize the correlation between $\phi^{(r)}({C}^{(r)})$ and $\psi^{(r)}({P}^{(r)})$. In particular, we adopt an intra-layer loss, computed as Pearson’s correlation coefficient between these two representations.  Formally, the intra-layer loss is computed by:
\begin{equation}
\label{eqn:correlation}
    \mathcal{L}_{{intra}} = \sum_{r=1}^{r_{max}} \frac{|\text{Cov}(\phi^{(r)}({C}^{(r)}), \psi^{(r)}({P}^{(r)}))|} {\sqrt{\text{Var}(\phi^{(r)}({C}^{(r)}))} \sqrt{\text{Var}(\psi^{(r)}({P}^{(r)}))}},
\end{equation}
where Cov(.,.) and Var(.) indicate covariance and variance operations, respectively. In Eqn~\ref{eqn:correlation}, the correlation coefficient between $\phi^{(r)}({C}^{(r)})$ and $\psi^{(r)}({P^{(r)}})$ is encouraged to converge to 0. Actually, as the correlation loss converges, the layer-shared representations ${C}^{(r)}$ and the layer-specific representations $ {P}^{(r)} $ are statistically independent.
By reducing the correlation, we encourage independence between ${C}^{(r)}$ and ${P}^{(r)}$, allowing each to capture distinct information. This ensures that the layer-shared representation focuses on shared structures across layers, while the layer-specific representation captures unique information within each layer.

\noindent\textbf{Proximity loss}. Intuitively, nodes depend on their communities to learn distinct representations within the graph. The proximity loss is designed to pull the representation of the central node closer to its neighbors while pushing it away from unrelated nodes. To achieve this, we first apply FFNs to project the neighborhood information into the latent space.

\begin{equation}
\label{eqn:hops}
    ^iZ^{(r)}= \textrm{FFN}_i\left(\textrm{CONCAT}(O^0_r \hat{X}^{(r)}, O^1_r \hat{X}^{(r)},\ldots, O^i_r \hat{X}^{(r)})\right)
\end{equation}
\begin{equation}
    ^{com}Z^{(r)} = \sum_{k=1}^{k_{max}} \alpha_k~^kZ^{(r)};
    \alpha_{} = \frac{\textrm{exp}((C^{(r)}|| ^kZ^{(r)})W_a)}{\sum_{i=1}^{k_{max}}\textrm{exp}((C^{(r)}|| ^iZ^{(r)})W_a)}
\end{equation}
Where ${k_{max}}$ is a hyper-parameter to control the scope (hops) of the neighborhood. We set it to 3 according to the experiments. $||$ is the concatenation operator, and $W_a \in \mathbb{R}^{ 2{f_h} \times 1}$ is the learnable weight matrix. Then, the proximity loss is modeled as margin triplet loss~\cite{schroff2015facenet} which is computed as follows.

\begin{equation}
    \mathcal{L}_p = 
    \sum_{r = 1}^{r_{max}}\mathcal{L}_p^{(r)}
    = \sum_{r = 1}^{R}\frac{1}{N^2}\sum_{v \in \mathcal{V}}\sum_{u \in \mathcal{V}}\mathcal{L}_p^{(r)}(u,v)
\end{equation}

\begin{equation}
    \mathcal{L}_p^{(r)}(u,v) = 
    -\textrm{max}\left(\sigma(C^{(r)}_v\ ^{com}Z^{(r)}_v) - \sigma (C^{(r)}_v\ ^{com} Z^{(r)}_u) + \epsilon,\ 0\right)
\end{equation}
where $\epsilon$ is the margin value and $\sigma(\cdot)$ is the sigmoid function.

\singlesearchname~takes the above three losses into account together. The overall loss function is defined as:

\begin{equation}
\label{equ:loss_func}
    \mathcal{L} = \mathcal{L}_p+\alpha\mathcal{L}_{mat}+\beta\mathcal{L}_{cor}
\end{equation}
where $\alpha, \beta \in [0,1]$ are the coefficients to balance the two losses. $\alpha$ and $\beta$ are set as 0.8 and 0.4 according to our experiments.
Note that all the $\mathcal{L}_p$, $\mathcal{L}_{mat}$ and $\mathcal{L}_{cor}$ do not contain information of ground-truth communities, and thus \singlesearchname~ do not rely on labels.

\subsection{Inter-layer Similarity-to-Community}

\begin{algorithm}[t]
\captionsetup{singlelinecheck=false} 
\captionsetup{margin={0pt,5em}}
\caption{\parbox{\linewidth}{Inter-layer Similarity-to-Community}}
\label{algo:score_computation}
\LinesNumbered
\DontPrintSemicolon
\KwIn{The query $\mathcal{V}_q$, multilayer graph $\mathcal{G}=\{\mathcal{V}, \mathcal{E}\}$, layer-shared representations $C$ and layer-specific representations $P$, the balancing coeffient $\lambda$.}
\KwOut{The identified community $\Tilde{\mathcal{G}}_q$.}

Initialize $^cS^{(r)},\ ^pS^{(r)}\leftarrow \{^cs^{(r)}_v =0,\ ^ps^{(r)}_v =0\ \textrm{for}\ v \in \mathcal{V}\}$\;
\For{each layer $r \in \{1,\cdots,r_{max}\}$}{
\For{$\{v\} \in \mathcal{V}$}{
     \For{$\{u\} \in V_q$}{
        $^cs_v^{(r)} =\ ^cs_v^{(r)}+ \frac{\sum_{i=0}^{f_h}C_{v,i}C_{u,i}}{\sqrt{\sum_{i=0}^{f_h}C_{v,i}C_{v,i}}\times \sqrt{\sum_{i=0}^{f_h}C_{u,i}C_{u,i}}}$\;

        $^ps_v^{(r)} =\ ^p_vs^{(r)}+ \frac{\sum_{i=0}^{f_h}P_{v,i}P_{u,i}}{\sqrt{\sum_{i=0}^{f_h}P_{v,i}P_{v,i}}\times \sqrt{\sum_{i=0}^{f_h}P_{u,i}P_{u,i}}}$\;
}
$^cs_v^{(r)}\leftarrow\frac{^cs_v^{(r)}}{|V_q|} $; $^ps_v^{(r)}\leftarrow\frac{^ps_v^{(r)}}{|V_q|} $;
}
$^c\mu^{(r)},\ ^c\sigma^{(r)}\leftarrow \textrm{compute mean and variance on } ^cS^{(r)}$ \;
$^p\mu^{(r)},\ ^p\sigma^{(r)}\leftarrow \textrm{compute mean and variance on } ^pS^{(r)}$ \;
$S^{(r)} = \frac{^cS^{(r)}-^c\mu^{(r)}}{^c\sigma^{(r)}}+\lambda \frac{^pS^{(r)}-^p\mu^{(r)}}{^p\sigma^{()}}$ \;
}
\textbf{Initialize} thread pool with $r_{max}$ threads\;
\For{each layer $r \in \{1,\cdots,r_{max}\}$ \textbf{in parallel} using threads}{
    $t_s= 0$; $t_e = \left|{S^{(r)}} \right|$\;
    $\hat{S}^{(r)} \leftarrow$ sort $S^{(r)}$ from large to small\;
    \While{$t_s \textless t_e$}{
        $\mathcal{V}_{mid} = \{v_i| \hat{s}_i \geq \hat{s}_{\frac{t_s+t_e}{2}}\}$; $\mathcal{V}_{left} = \{v_i| \hat{s}_i \geq \hat{s}_{\frac{t_s+t_e}{2}-1}\}$\;
        \If{$\textrm{ESG}(\hat{S}^{(r)}, \mathcal{V}_{mid}, \mathcal{G}) > \textrm{ESG}(\hat{S}^{(r)}, \mathcal{V}_{left}, \mathcal{G})$}{
             $t_s \leftarrow \frac{t_s+t_e}{2}$\;
        }
        \Else{
             $t_e \leftarrow \frac{t_s+t_e}{2}$\;
        }
    }
    \Return $\Tilde{\mathcal{G}}_q^{(r)} \leftarrow \textrm{induced subgraph of }\mathcal{V}_q \cup \{v_i| \hat{s}_i \geq \hat{s}_{t_e}\}$\;
}


\end{algorithm}

After pre-training, the similarity of the representation of two nodes can serve as a measure of their proximity. TransZero proposes a similarity-to-community to identify the community based on the similarity of representations without the usage of labels. However, it lacks the inter-layer information. Here, we propose the inter-layer similarity-to-community.
We first calculate the community score by computing the cosine similarity between the representations of the query node and other nodes in each layer. Alternative similarity metrics, such as L1 and L2 similarity, are also evaluated in the experiments in Section~\ref{sec:Experiment}. We compute both the layer-shared community scores and the layer-specific community scores. 
We introduce a hyperparameter, $\lambda$, to integrate inter- and intra-layer information and balance the contributions of these two community scores. Based on our experiments, $\lambda$ is set to -1, encouraging the selection of communities with higher similarity in layer-shared representations and lower similarity in layer-specific representations.

After obtaining the layerwise community score that incorporates both inter- and intra-layer information, the next step is to identify the community based on these scores. Strategies like a fixed threshold struggle to adapt to the diverse nature of real-world communities. Thus, we follow the expected score gain (ESG) objective proposed in ~\cite{wang2024efficient}, which aims to identify communities with high ESG. The definition of ESG is as follows.

Given a community found in $r$-th layer $\mathcal{G}_q^{(r)}=(\mathcal{V}_{\mathcal{G}_q^{(r)}}, \mathcal{E}_{\mathcal{G}_q^{(r)}})$ and community score $S^{(r)}$, the expected score gain of $\mathcal{G}_q^{(r)}$ is: 
\begin{equation}
    \label{equ:density_modularity}
    \textrm{ESG}(S^{(r)}, \mathcal{V}_{\mathcal{G}_q^{(r)}}, \mathcal{G})=\frac{1}{|\mathcal{V}_{\mathcal{G}_q^{(r)}}|^\tau}(\sum_{v\in \mathcal{V}_{\mathcal{G}_q^{(r)}}}s_v^{(r)}-\frac{\sum_{u\in \mathcal{V}}s_u^{(r)}}{|\mathcal{V}|}|\mathcal{V}_{\mathcal{G}_q^{(r)}}|)
\end{equation}
Here, $\tau$ is a hyperparameter that controls the granularity of community and is set as 0.9 guided by experiments in Section~\ref{sec:Experiment}. The first term in the brackets sums up the scores within the community, while the second term represents the average scores in the entire graph. ESG measures the extent to which the community score is more cohesive than the overall graph. Therefore, we aim to select nodes that form a community with an ESG value as high as possible. 

\begin{lemma}
\label{lemma: submodular}
The ESG function is not submodular.
\end{lemma}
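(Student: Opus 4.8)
The plan is to prove that ESG is not submodular by exhibiting an explicit counterexample, since submodularity is a universally quantified property and a single violation suffices to refute it. Recall that a set function $f$ is submodular if for all sets $\mathcal{A} \subseteq \mathcal{B}$ and any element $x \notin \mathcal{B}$, the marginal gain satisfies $f(\mathcal{A} \cup \{x\}) - f(\mathcal{A}) \geq f(\mathcal{B} \cup \{x\}) - f(\mathcal{B})$ (the diminishing-returns condition). To disprove submodularity, I would construct a concrete small instance --- a fixed layer $r$ with an explicit graph $\mathcal{G}$, an explicit score vector $S^{(r)}$, and two nested node subsets $\mathcal{A} \subseteq \mathcal{B}$ together with a single node $x$ --- for which the marginal gain of adding $x$ is strictly larger for the bigger set $\mathcal{B}$ than for the smaller set $\mathcal{A}$, directly violating the inequality.

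The key structural observation driving the counterexample is the normalizing factor $\frac{1}{|\mathcal{V}_{\mathcal{G}_q^{(r)}}|^\tau}$ in Equation~\ref{equ:density_modularity} with $\tau = 0.9$. Because $\tau \in (0,1)$, this prefactor is a concave, sublinear function of the community size, and it interacts nonlinearly with the summed scores inside the brackets. First I would fix a small node set (say four or five nodes) and assign scores so that the bracketed term $\sum_{v \in \mathcal{V}_{\mathcal{G}_q^{(r)}}} s_v^{(r)} - \frac{\sum_{u} s_u^{(r)}}{|\mathcal{V}|}|\mathcal{V}_{\mathcal{G}_q^{(r)}}|$ behaves controllably as nodes are added. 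Then I would compute $\mathrm{ESG}$ for the four sets $\mathcal{A}$, $\mathcal{A}\cup\{x\}$, $\mathcal{B}$, $\mathcal{B}\cup\{x\}$ and verify numerically that the diminishing-returns inequality fails. Since all four evaluations are closed-form arithmetic expressions involving only the chosen scores and integer sizes raised to the power $\tau$, this is a routine finite computation once the instance is pinned down.

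The main obstacle is choosing the scores and set sizes so the violation emerges cleanly rather than being swamped by the linear part of the objective. The bracketed expression itself is a modular (additive) function of the selected nodes shifted by a size-proportional penalty, so on its own it would not break submodularity; the non-submodularity must be forced by the concave prefactor $|\cdot|^{-\tau}$ amplifying marginal gains as the set grows in a regime where the bracketed total is increasing. I would therefore engineer the instance so that adding $x$ increases the bracketed sum, and pick $|\mathcal{A}|$ and $|\mathcal{B}|$ in a range where the ratio of prefactors $\frac{(|\mathcal{B}|+1)^{-\tau} \text{ effect}}{(|\mathcal{A}|+1)^{-\tau}\text{ effect}}$ tips the marginal comparison the wrong way. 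A secondary subtlety is ensuring the chosen node sets correspond to a legitimate induced subgraph in $\mathcal{G}$ consistent with the definition, though since ESG as written depends on $\mathcal{V}_{\mathcal{G}_q^{(r)}}$ and the global scores rather than on connectivity, this constraint is mild. Once a single valid violating quadruple is displayed with its four numerical ESG values, the proof is complete.
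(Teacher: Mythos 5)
Your approach is essentially the same as the paper's: refute submodularity by exhibiting nested sets $A \subseteq B$ and an outside node whose marginal ESG gain is strictly larger for $B$ than for $A$, exploiting the concave prefactor $|\cdot|^{-\tau}$ with $\tau \in (0,1)$ (your observation that the bracketed term alone is modular and cannot be the culprit is correct). The paper simply carries out the concrete instantiation you defer as routine arithmetic --- $\tau=0.9$, $A=\{x\}$, $B=\{x,y\}$, scores $s_x=1.0$, $s_y=0.5$, $s_u=0.1$, giving marginal gains $-0.4849 < -0.1964$ --- so your plan goes through as stated.
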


\begin{proof}
    We prove this lemma by constructing a case that violates the definition of the submodular function~\cite{fujishige2005submodular}. Let $\tau=0.9$
    and consider two sets $A=\{x\}$, $B=\{x, y\}$ and node $u$ where $x,y,u\in V$ and $s_x=1.0, s_y=0.5, s_u=0.1$. $ESG(\{u\} \cup A)-ESG(A)=-0.0182-0.4667=-0.4849$ and $ESG(\{u\} \cup B)-ESG(B)=0-0.1964=-0.1964$. Therefore, we have $ESG(\{u\} \cup A)-ESG(A) =-0.4849 \textless ESG(\{u\} \cup B)-ESG(B) =-0.1964$ which violates the definition of submodular function. Thus, the ESG function is not submodular.
\end{proof}

The overall inter-layer similarity-to-community module is shown in Algorithm~\ref{algo:score_computation}. This algorithm takes as inputs the set of query nodes $\mathcal{V}_q$, the multilayer graph $\mathcal{G}$, the learned layer-shared representations $C$, the learned layer-specific representation $P$ and the balancing coefficient $\lambda$. The algorithm proceeds by first initializing both the layer-shared community scores and layer-specific community scores to zeros for all nodes (Line 1). Then, in each layer, for each node in the graph, the algorithm calculates the cosine similarity between the node and the query nodes using both the layer-shared and layer-specific representations to take into account inter-layer information into account (Lines 2 to 10). These similarity scores are accumulated across all query nodes and normalized by the number of query nodes  (Line 7).
We observe that  $^cS_{(r)}$ is significantly greater than $^pS^{(r)}$. This disparity makes the layer-specific features ineffective in searches. To mitigate this issue, normalization is applied separately to each node of $^cS_{(r)}$ and $^pS^{(r)}$. 
We compute the mean and variance of the accumulated scores across all nodes, for both the layer-shared and layer-specific representations. The final community score for each node is then calculated by normalizing the layer-shared and layer-specific scores using their respective mean and variance, and linearly combining them with the balancing coefficient $\lambda$ (Lines 8 to 10). 

Moreover, as highlighted in~\cite{wang2024efficient}, the ESG value for the first $k$ nodes in the sorted queue of community scores initially increases and then decreases as $k$ increases. This property enables us to utilize the binary search to find the community with the global highest ESG value. Based on this property, we parallel search the community in each layer using the binary search strategy.
First, a thread pool is initialized with $r_{max}$ threads (Line 11), where each thread handles the computation for one layer. For each layer $r$, variables are initialized and the community score $S^{(r)}$ is sorted in descending order (Lines 13 and 14), and the binary search approach is employed to iteratively refine the set of candidate nodes: $\mathcal{V}_{mid}$ and $\mathcal{V}_{left}$ (Lines 15 to 20). Based on the ESG comparison, the algorithm narrows the search range by adjusting the start and end pointers, $t_s$ and $t_e$.
Once the optimal threshold of the community score is determined, the algorithm returns the induced subgraph $\Tilde{\mathcal{G}}_q^{(r)}$ (Line 21), consisting of the query nodes $\mathcal{V}_q$ and the nodes whose community scores exceed the final threshold.

\section{\combinename~Algorithm}
\label{sec:EMerge}

After identifying the community in each layer, the key challenge is how to merge these communities from different layers into a single, consensus community. As discussed in Section~\ref{sec:Introduction}, naive approaches such as simple voting only aggregate the final results from each layer, without accounting for the layer-specific characteristics of each layer or the varying response probabilities that layers assign to different node categories. 
In this section, we introduce the \combinename~algorithm, an Expectation-Maximization (EM)-based merge method inspired by\cite{dawid1979maximum}. This algorithm iteratively estimates both the layer-specific uncertainties (i.e., the error rates of each layer) and the true community memberships, thereby achieving a more accurate consensus community.

The EM algorithm is an iterative method for finding maximum likelihood estimates of parameters in probabilistic models with latent unobserved variables. The algorithm consists of two steps: E-step (Expectation), calculating the expected value of the latent variables; M-step (Maximization), re-estimating the model parameters using the E-step results. Starting from an initial parameter estimate, these two steps are repeated iteratively until convergence to a local maximum likelihood estimate. The EM algorithm is particularly useful for finding maximum likelihood estimates with latent unobserved data. Its simplicity and broad applicability have made EM algorithm a popular choice for problems involving latent unobserved variable models.

We therefore model both the layer-specific uncertainties (i.e., the error rates of each layer) and the true community memberships as latent variables, adapting the EM algorithm for community merging. By treating these variables as latent, the EM algorithm can iteratively optimize both based on the discovered communities. 
In this formulation, each layer provides a binary classification result (i.e., communities), determined using \singlesearchname, indicating whether each node belongs to the community.
\combinename~consists the initialize step, E-step and M-step. Assume that all the categories are denoted as $\mathcal{J}$ and the identified communities in all the layers are organized as ${D}\in\mathbb{R}^{n\times |\mathcal{J}|\times r_{max}}$. The term $D_{ij}^{(r)}$ represents the number of times node $v_i$ is assigned to category $j$ in the $r$-th layer. Note that, each layer evaluates each node $v_i$ only once, which means that $D_{ij}^{(r)}$ can take a value of either 0 or 1. Next, we will introduce the details of these steps.

\noindent\textbf{Initialize-step}. We begin by initializing the latent variable $T \in \mathbb{R}^{n \times 2}$, which indicates the true membership probabilities of each node in one of two possible categories: either belonging to the community (category 1) or not (category 0). Specifically, $T_{ij}$ denotes the probability that the $i$-th node is assigned to category $j$. $T_{ij}$ in 0-th iteration are computed as: 
\begin{equation}
\label{eqn:initial}
    ^{[0]}T_{ij} = \frac{\sum_{r} D_{ij}^{(r)}}{\sum_{r} \sum_{l} D_{il}^{(r)}},
\end{equation}
where $j \in \{0, 1\}$ and $l \in \{0, 1\}$ represent the two categories, where 0 indicates non-membership and 1 indicates membership in the community. The initialization of $T$ is carried out by averaging the classification decisions across all layers. 
Following initialization, $T$ is iteratively optimized by accounting for both the error rate and the prior tendencies of each layer in classifying nodes in the E-step and the M-step. 

\noindent\textbf{M-step}. In the M-step, given the true membership, we compute $\pi$, the error rate of each layer and $\eta$, prior of each category. The proof can be found in Section ~\ref{sec:Analysis}.
\begin{equation}
\begin{aligned}
\label{eqn:E_step}
    &^{[m]}\pi_{jl}^{(r)} = {\sum_i}^{[m-1]}T_{ij}D_{il}^{(r)} \bigg/ \sum_l {\sum_i}^{[m-1]}T_{ij}D_{il}^{(r)}, \\
    &^{[m]}{\eta}_j = {\sum_i}^{[m-1]}T_{ij} / n,
\end{aligned}
\end{equation}
where $^{[m]}\pi_{jl}^{(r)}$ represents the probability in the m-th iteration that the $r$-th layer mistakenly classifies a node from category $j$ as category $l$, with $\pi^{(r)} \in \mathbb{R}^{2 \times 2}$. Additionally, $^{[m]}\eta_j \in \mathbb{R}^{|\mathcal{J}|}$ denotes the occurrence probability of category $j$ in m-th iteration. The expression $^{[m]}T_{ij}D_{il}^{(r)}$ indicates that the true membership of node $v_i$ is $j$ (as reflected by $^{[m]}T_{ij}$), while the $r$-th layer classifies it as category $l$ (as indicated by $D_{il}^{(r)}$).

\noindent\textbf{E-step}. The E-step updates the true membership given the error-rate of each layer and the prior of each category computed in the E-step. According to the Bayesian method~\cite{joyce2003bayes}, the relationship between $T_{ij}$, $p_j$ and $\pi$ is as follows.

\begin{equation}
\begin{aligned}
&p(T_{ij} = 1 | {D}) \propto p({D} | T_{ij} = 1)p(T_{ij} = 1) \propto \prod_{r=1}^{r_{max}} \prod_{l=0}^{|\mathcal{J}|} (\pi_{jl}^{(r)})^{D_{il}^{(r)}} \eta_{j}, \\   
\end{aligned}
\end{equation}
Here, $(\pi_{jl}^{(r)})^{D_{il}^{(r)}}$ indicates that probability that $r$-th layer classify category $j$ into $l$ (as indicated by $\pi_{jl}^{(r)}$) while the original decision by $r$-th layer is category $l$ (as indicated by ${D_{il}^{(r)}}$). According to above formulation, we can update the true membership as follows which normalize the true membership matrix by row.
\begin{equation}
\label{eqn:M_step}
    p(^{[m+1]}T_{ij} = 1 | {D}) = \frac{\prod_{r=1}^{r_{max}} \prod_{l=0}^{|\mathcal{J}|} (^{[m]}\pi_{jl}^{(r)})^{D_{il}^{(r)}} \ ^{[m]}\eta_{j}}
{\sum_{q=0}^{|\mathcal{J}|} \prod_{r=1}^{r_{max}} \prod_{l=1}^{|\mathcal{J}|} (^{[m]}\pi_{ql}^{(r)})^{D_{il}^{(r)}} \ ^{[m]}\eta_{q}}, 
\end{equation}

\begin{algorithm}[t]
\captionsetup{singlelinecheck=false}
\captionsetup{margin={0pt,5em}}
\caption{\parbox{\linewidth}{\combinename~Algorithm}}
\label{algo:em_observer_error}
\LinesNumbered
\DontPrintSemicolon
\KwIn{The community decision data $D$, tolerance $\epsilon$.}
\KwOut{The identified true membership probabilities $T$.}
Initialize ${\!^{[0]}T}$ according to Eqn~\ref{eqn:initial}\;
\For{$m = 1, 2, \ldots$ until convergence}{
\For{all $j, l, r$}{
$^{[m]}\pi_{jl}^{(r)} = \sum_i\ ^{[m-1]}T_{ij}D_{il}^{(r)} \bigg/ \sum_l \sum_i T_{ij}D_{il}^{(r)}$\;
$^{[m]}{p}_j = \sum_i\ ^{[m-1]}T_{ij} / n$
}
\For{all $i, j$}{
$p(^{[m+1]}T_{ij} = 1 | {D}) = \frac{\prod_{r=1}^{r_{max}} \prod_{l=1}^{|\mathcal{J}|} (^{[m]}\pi_{jl}^{(r)})^{D_{il}^{(r)}} \ ^{[m]}p_{j}}
{\sum_{q=1}^{|\mathcal{J}|} \prod_{r=1}^{r_{max}} \prod_{l=1}^{|\mathcal{J}|} (^{[m]}\pi_{ql}^{(r)})^{D_{il}^{(r)}} \ ^{[m]}p_{q}}$;
}
\If{$|^{[m+1]}T-^{[m]}T|$ < $\epsilon$}{
             Break\;
        }
}
\Return $\text{SOFTMAX}(^{[m]}T)$;
\end{algorithm}

\noindent\textbf{Overall algorithms}. 
The overall algorithm of \combinename~is summarized in Algorithm~\ref{algo:em_observer_error}. The algorithm takes as inputs the community decision data D which represents the communities identified by each layer for each node, and a convergence tolerance $\epsilon$. It outputs the estimated true membership probabilities T for each node. The process begins by initializing the membership probabilities using Eqn~\ref{eqn:initial} (Line 1). Then, it enters an iterative process (Lines 2 to 9). It repeatedly executes the M-step (Lines 3 to 5) and E-step (Lines 6 to 7) until the parameters converge (Lines 8). 
Upon convergence, this distribution T is then processed through the SOFTMAX function to determine the final categorical distribution of $\mathcal{V}$ (Line 10).

\section{Theoretical Analysis}
\label{sec:Analysis}

\noindent\textbf{The proof of Eqn~\ref{eqn:E_step}}.
The marginal log-likelihood of the observed data is as follows:
\begin{equation}
L(\theta; {D}, T) = \sum_i \sum_j T_{ij} [\log \eta_j + \sum_r \sum_l D_{il}^{(r)} \log \pi_{jl}^{(r)}]
\end{equation}
Where $\theta = \{\eta_j, \pi_{jl}^{(r)}\}$ are the parameters to estimate to estimate for the next E-step given current parameters $^{[m]}\theta$. The Q function can be computed as: 
\begin{align}
Q(\theta | ^{[m]}\theta) &= E[L(\theta; {D}, T) | {D},\! ^{[m]}\theta] \\
&= \sum_i \sum_j E[T_{ij} | {D},\!^{[m]}\theta] [\log \eta_j + \sum_r \sum_l D_{il}^{(r)} \log \pi_{jl}^{(r)}] \nonumber
\end{align}
Where $E[T_{ij} | {D},\! ^{[m]}\theta]$ is the $p(T_{ij} = 1 | {D})$

In the M-step, we need to maximize the Q function. This can be done by computing the derivative of the Q function with respect to the parameter $p_j$, and setting the gradient to zero.
\begin{equation}
\frac{\partial Q}{\partial \eta_j} = \sum_i \frac{E[T_{ij} | {D},\!^{[m]}\theta]}{\eta_j} - \lambda = 0
\end{equation} Where $\lambda$ is used to guarantee $\sum_j \eta_j = 1$. Solve this equation, we can get that:
\begin{equation}
\eta_j = \frac{\sum_i E[T_{ij} | {D}, \!^{[m]}\theta]}{n} = \frac{\sum_i T_{ij}}{n}
\end{equation}

Similarly, we can compute the derivative of the Q function w.r.t. the parameter $\pi_{jl}^{(r)}$, and get the updated estimation of $\pi_{jl}^{(r)}$.
\begin{equation}
\frac{\partial Q}{\partial \pi_{jl}^{(r)}} = \sum_i \frac{E[T_{ij} | {D}, \!^{[m]}\theta] D_{il}^{(r)}}{\pi_{jl}^{(r)}} - \lambda_{jr} = 0
\end{equation}

\begin{equation}
\pi_{jl}^{(r)} = \frac{\sum_i E[T_{ij} | {D}, \theta^{(t)}] D_{il}^{(r)}}{\sum_i E[T_{ij} | {D}, \theta^{(t)}] \sum_l D_{il}^{(r)}} = \frac{\sum_i\ T_{ij}D_{il}^{(r)}}{\sum_l \sum_i T_{ij}D_{il}^{(r)}}
\end{equation}

\noindent \textbf{Time complexity analysis.}  
The heat-kernel-based graph diffusion needs $O(n)$~\cite{gasteiger2019diffusion} to aggregate the information. The FFNs in the encoder need $O({f_h}\times {f_h}\times L)$ for FFNs with L layers of the fully connected layers. Thus, \singlesearchname~needs $O\left(\left(n+{f_h}\times {f_h}\times L\right)\times Epoch\right) $ for pre-training the model where $Epoch$ is the number of training epochs. The score computation needs $O(|V_q|\times n\times {f_h})$ for one layer and needs $O(|V_q|\times n \times {f_h} \times r_{max})$ for $r_{max}$ layers. The identification needs $O(n\times\log{n})$ for sorting the scores. It iterates at most $O(\log{n})$ iterations with each iteration needing
$O(n)$ operations. Therefore, \singlesearchname~needs $O(|V_q|\times n\times {f_h} \times r_{max}+2\times n \times \log{n}\times r_{max})$ for identification the community of one query.

The update of $\pi$ needs $O(2\times 2\times r_{max} \times n)$ operation as there are only two categories. The update of $p$ also needs $O(4\times r_{max}\times n)$. The update of $T$ needs $O(4\times r_{max}\times n)$. Assume that the \combinename~iterates for $I$ iterations, the time complexity of \combinename~ is $O(12\times r_{max}\times n\times I)$.

\begin{table}[t]
\centering 
\caption{The Profiles of Datasets}
\vspace{-0.4cm}
\label{tab:datasets}
\begin{tabular} 
{|p{1.3cm}<{\centering}|p{0.6cm}<{\centering}|p{0.7cm}<{\centering}|p{0.7cm}<{\centering}|p{1.0cm}<{\centering}|p{0.7cm}<{\centering}|p{1.0cm}<{\centering}|}

\hline
\cellcolor{lightgray}\textbf{Datasets} & \cellcolor{lightgray}\textbf{Alias} & \cellcolor{lightgray}\textbf{$|\mathcal{V}|$} & \cellcolor{lightgray}\textbf{$R$} & \cellcolor{lightgray}\textbf{$|\mathcal{E}|$} & \cellcolor{lightgray}\textbf{$|\mathcal{G}_q|$} & \cellcolor{lightgray}\textbf{$f$}\\ \hline \hline

Aucs              & AU & 61                       & 5  & 2.4K  & 9  & 61   \\\hline
Terrorist         & TR & 79                       & 14 & 2.2K  & 5  & 32   \\ \hline
RM                & RM & 91                       & 10 & 14K   & 2  & 32   \\ \hline
3sources          & 3S & 169                      & 3  & 7.5K  & 5  & 3631 \\ \hline
Wikipedia & WK & 693                      & 2  & 11.5K & 10 & 10   \\ \hline
ACM               & AC & 3025                     & 2  & 1.1M  & 3  & 1870 \\ \hline
Freebase          & FB & 3492                     & 3  & 142K  & 3  & 3492 \\ \hline
DBLP              & DB & 4057                     & 3  & 5.9M  & 4  & 334  \\ \hline
IMDB              & IM & 4780                     & 2  & 64K   & 3  & 1232 \\ \hline
Higgs             & HG & 456K & 4  & 13M   & 35 & 64  \\ \hline

\end{tabular}
\vspace{-5mm}
\end{table}
\section{Experimental Evaluation}
\label{sec:Experiment}

\begin{table*}[t]
\centering 
\caption{Overall F1-score performance}
\vspace{-0.4cm}
\label{tab:exp1_f1score}
\begin{tabular} {|p{1.9cm}<{\centering} p{2.2cm}<{\centering} | p{0.8cm}<{\centering} p{0.8cm}<{\centering} p{0.8cm} <{\centering} p{0.8cm}<{\centering} p{0.8cm}<{\centering} p{0.8cm}<{\centering} p{0.8cm}<{\centering} p{0.8cm}<{\centering} p{0.8cm}<{\centering} p{0.8cm}<{\centering} | p{1.2cm}<{\centering}|}
\hline
\textbf{Categories} & \textbf{Methods} & AU & TR & RM & 3S & WK & AC & FB  &  DB & IM & HG & Improve \\ \hline \hline
\multirow{4}{*}{\textbf{Criterion-based}} 
& FirmCore   &  0.2593 & 0.4615 & 0.7751 & 0.5319 & 0.2012 & 0.2728 & 0.1017 & 0.6706 & 0.0695 & 0.2052  & 49.85\%  \\ 
& FirmTruss   & 0.2917 & 0.4615 & \cellcolor[RGB]{255,200,200}\textbf{0.8905} & 0.5628 & 0.3387 & 0.3609 & 0.2313 & {0.7613} & 0.4345 & 0.2770  & 39.24\% \\ 
& FocusCore & 0.1958 & 0.5214 & 0.7610 & 0.5523 & 0.1802 & 0.3228 & 0.4882 & 0.4006 & 0.3743 & 0.0015 & 47.36\%  \\ 
\hline
\multirow{3}{*}{\textbf{Learning-based}} 
& RWM    & 0.2912 & 0.3776 & 0.4563 & 0.4326 & 0.2081 & 0.4569 & 0.3281 & 0.1385 & 0.3745 & 0.2522  & 52.18\%  \\ 
& CS-MLGCN(IN)   & 0.7089 & 0.3214 & 0.1005 & 0.2417 & 0.3404 & 0.2730 & 0.4891 & 0.4355 & 0.4269 & 0.3355 & 48.61\%   \\ 
& CS-MLGCN(TD)   & 0.7868 & \underline{0.7653} & \underline{0.8289} & 0.6100 & 0.6054 & 0.6505 & 0.4891 & 0.6924 & 0.4833 & 0.6809 & 19.41\%   \\ 
& CS-MLGCN(HB)    & 0.8034 & 0.4220 & 0.5593 & 0.5341 & 0.5925 & 0.4482 & 0.6139 & 0.5332 & 0.4833 & 0.6645 & 28.79\%   \\ 
\hline
\multirow{3}{*}{\textbf{Single-layer}} 
& TransZero-Vote   & {0.8235} & 0.7248 & 0.5578 & {0.7342} & {0.8796} & {0.6556} &	{0.7262} & 0.7494 & {0.7446} & {0.8230} & 11.15\%   \\ 
& TransZero-ME   & 0.8417 & 0.7392 & 0.5720 & \underline{0.7562} & 0.8927 & \underline{0.6895} & 0.7333 & 0.7604 & 0.7563 & 0.8329 & 9.60\%   \\ 
& TransZero-CC   & \underline{0.8539} & 0.7508 & 0.5792 & {0.7551} & \underline{0.8985} & {0.6810} & \underline{0.7411} & \underline{0.7719} & \underline{0.7647} & \underline{0.8594} & 8.78\%   \\ 
\hline
\multirow{1}{*}{\textbf{Our}} 
& \framework~   & \cellcolor[RGB]{255,200,200}\textbf{0.9214} & \cellcolor[RGB]{255,200,200}\textbf{0.8842} & 0.6343& \cellcolor[RGB]{255,200,200}\textbf{0.9011} & \cellcolor[RGB]{255,200,200}\textbf{0.9523} & \cellcolor[RGB]{255,200,200}\textbf{0.8007} & \cellcolor[RGB]{255,200,200}\textbf{0.7850} & \cellcolor[RGB]{255,200,200}\textbf{0.8615} & \cellcolor[RGB]{255,200,200}\textbf{0.8117} & \cellcolor[RGB]{255,200,200}\textbf{0.9822} & - \\ 
\hline


\end{tabular}
    \begin{flushleft}

        \footnotesize $^*$ \colorbox[RGB]{255,200,200}{\textbf{Bold red}} indicates the best result. \underline{Underlined} indicates the second-best result. The last column presents the average improvement by \framework.
    \end{flushleft} 
\vspace{-6mm}
\end{table*}

\vspace{-1mm}
\subsection{Dataset Description}
Following prior works~\cite{mo2023disentangled, behrouz2022firmtruss}, we evaluated our approach using 10 publicly available datasets, including the large-scale multilayer graphs. A summary of the dataset characteristics is provided in Table~\ref{tab:datasets}. In this table, $|\mathcal{V}|$ represents the number of nodes, $R$ denotes the number of layers in the multi-layer graph, $|\mathcal{E}|$ refers to the number of edges, $|\mathcal{G}_q|$ is the number of communities, and $f$ indicates the original feature dimension.

\vspace{-3mm}
\subsection{Experimental Setup}

\noindent\textbf{Baselines:}
We compare \framework~ with 6 existing multilayer CS methods, including both rule-based and learning-based methods. For rule-based methods, we compare 1) FirmCore~\cite{hashemi2022firmcore}, which extends \textit{k}-core to model community structure; 2) FirmTruss~\cite{behrouz2022firmtruss}, which extends \textit{k}-truss for community modeling; and 3) FocusCore~\cite{wang2024focuscore}, which extends \textit{k}-core model and find community by both focus constraint and background constrain; For the learning-based models, we compare with 4) RWM~\cite{luo2020local}, which uses random walkers in each layer to capture local proximity w.r.t. the query nodes, ultimately returning a subgraph with minimal conductance; 5) CS-MLGCN~\cite{behrouz2022cs}, a leading supervised method for Multilayer CS. We evaluate CS-MLGCN under inductive, transductive and hybrid settings, as detailed below; and 6) TransZero, a state-of-the-art unsupervised single-layer CS method, is extended with a voting strategy (denoted as TransZero-Vote), along with a mixture of ensemble (denoted as TransZero-ME) and co-association techniques (denoted as TransZero-CC), to support multilayer CS, as detailed in Section~\ref{sec:Preliminary}.  

\noindent\textbf{Query Generation:}
Following ~\cite{wang2024efficient}, we use the following three mechanisms to generate the queries (containing $1\sim 3$ nodes).

\begin{itemize}[leftmargin=10pt, topsep=1pt]
\item{} \textbf{Inductive Setting (shortened as ID).} We randomly split all ground-truth communities into training and testing groups ($\sim$1:1 ratio). Training and validation queries come from the training communities, while test queries come from the testing communities. This setup evaluates the ability of the model to predict unseen communities. 
\item{} \textbf{Transductive Setting (shortened as TD).} We generate all the queries randomly from all the ground-truth communities.
\item{} \textbf{Hybrid Setting (shortened as HB).} We randomly split ground-truth communities into training and testing groups (~1:1 ratio). Training and validation queries come from the training set, while test queries are drawn from all communities.
\end{itemize}
Note that only the label-based method (i.e., CS-MLGCN) is trained under all three settings, while other methods, being label-free, produce consistent results across the three settings.

\noindent\textbf{Metrics:} Following previous methods for multilayer CS~\cite{hashemi2022firmcore, behrouz2022firmtruss, behrouz2022cs}, we use the F1-score to assess the accuracy of the identified communities relative to the ground-truth communities. A higher F1-score indicates a better performance.

\begin{figure*}
\subfigcapskip=-7pt 
    
    \subfigure[Efficiency results of the training phase]{ 
        
        \includegraphics[width=0.485\textwidth]{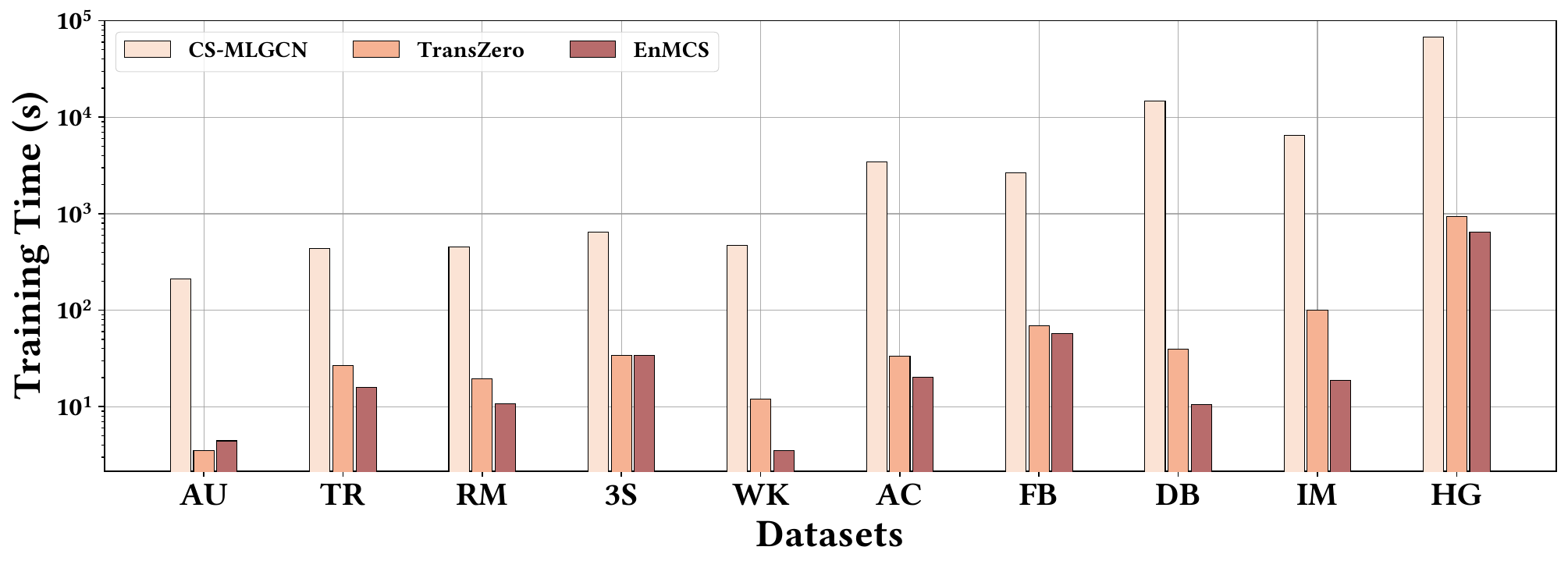}
    }
    \subfigure[Efficiency results of the search phase]{ 
        \includegraphics[width=0.485\textwidth]{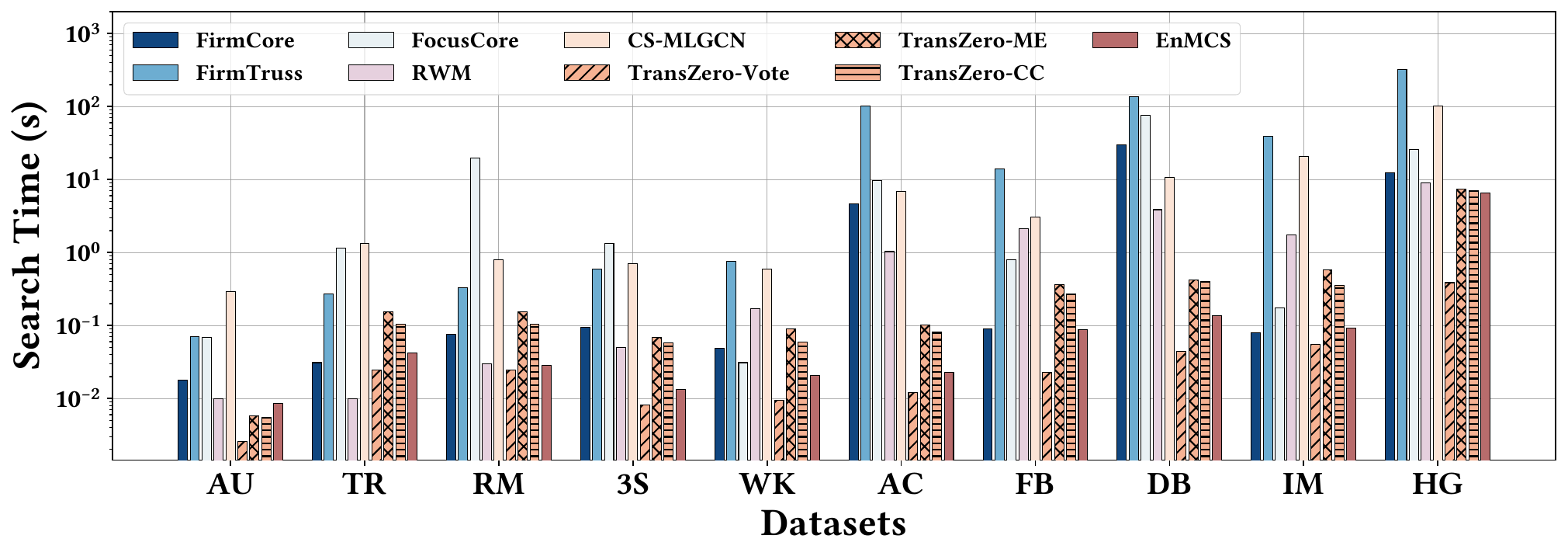}
    }
\vspace{-6mm}
    \caption{{Efficiency results}}
    \label{fig:efficiency_evaluation}
\vspace{-6mm}
\end{figure*}

\begin{figure*}
\subfigbottomskip=-3pt 
\subfigcapskip=-7pt 
    
    \subfigure[F1-score with varying $\lambda$]{ 
        
        \includegraphics[width=0.32\textwidth]{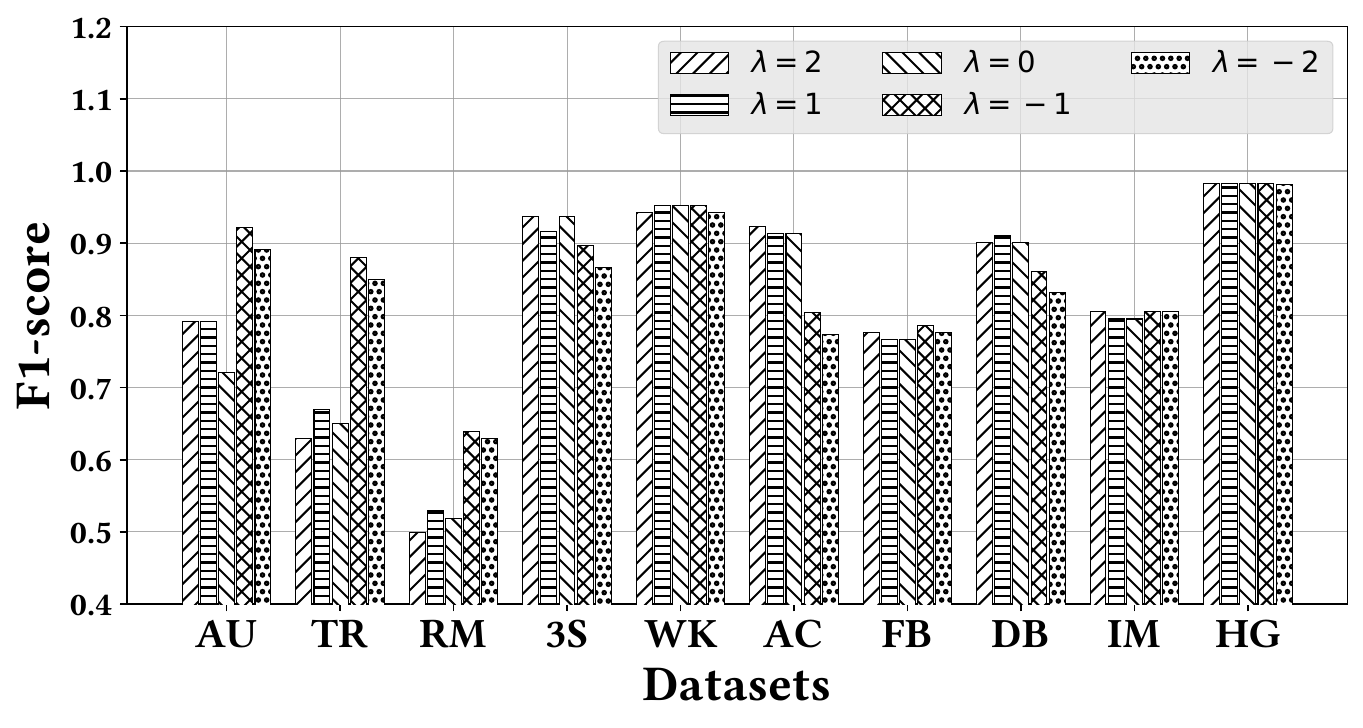}
    \vspace{-3mm}
    }
    \vspace{-1mm}
    \subfigure[F1-score with varying $\tau$]{ 
        
        \includegraphics[width=0.32\textwidth]{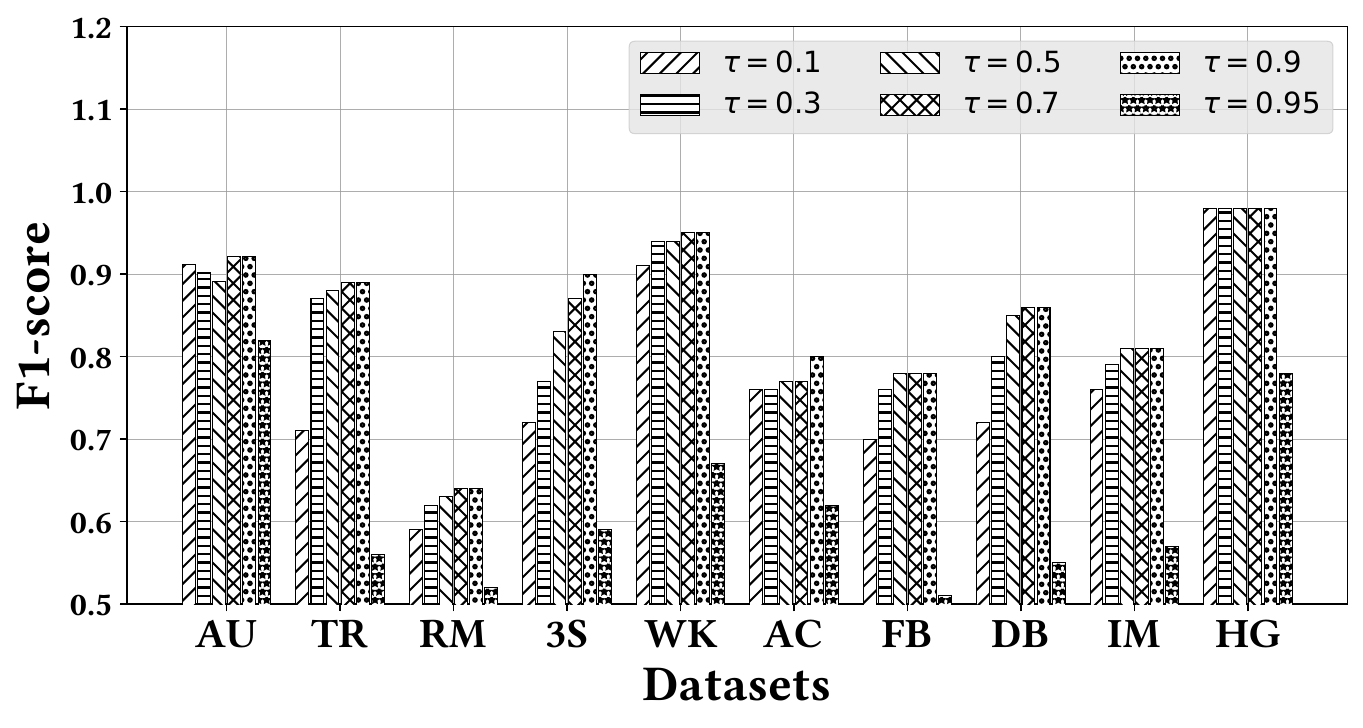}
    \vspace{-3mm}
    }
    \subfigure[F1-score with varying similarity definitions]{
        \includegraphics[width=0.32\textwidth]{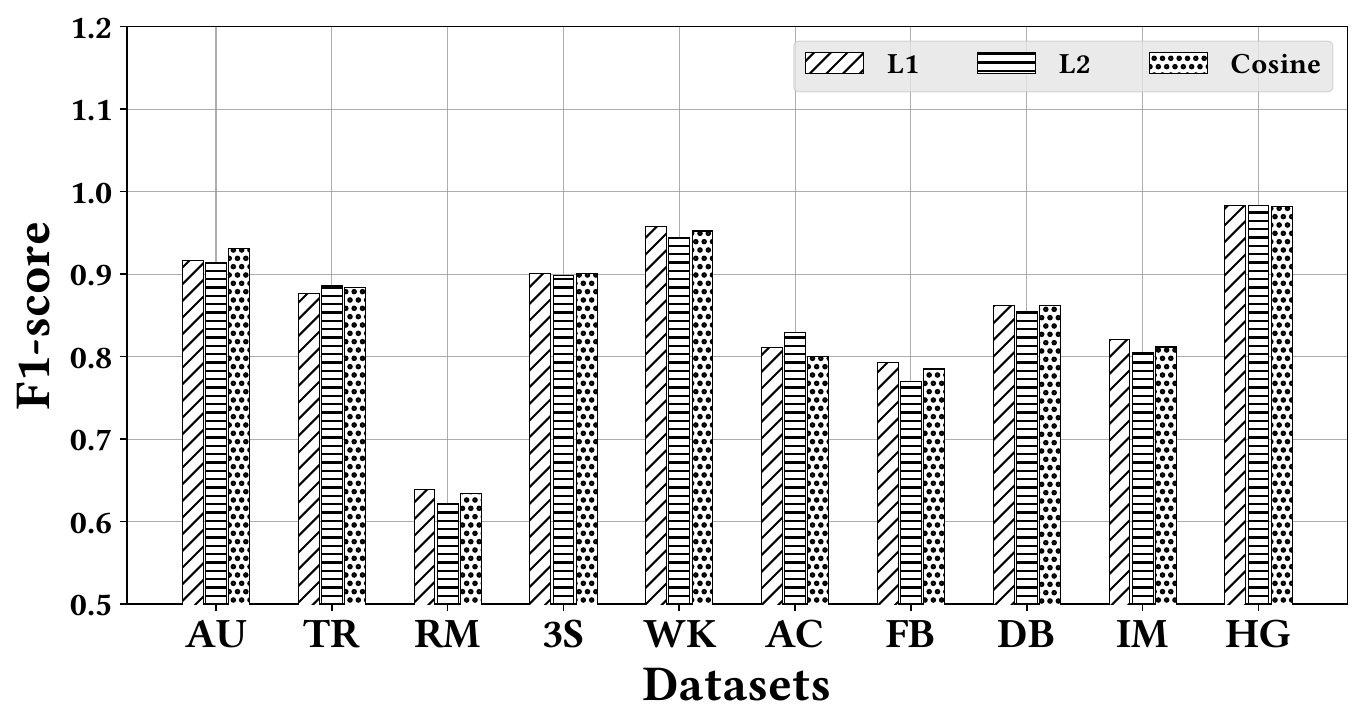} 
    \vspace{-3mm}
    }
    \subfigure[F1-score with varying epochs]{
        \vspace{-5mm}
        \includegraphics[width=0.32\textwidth]{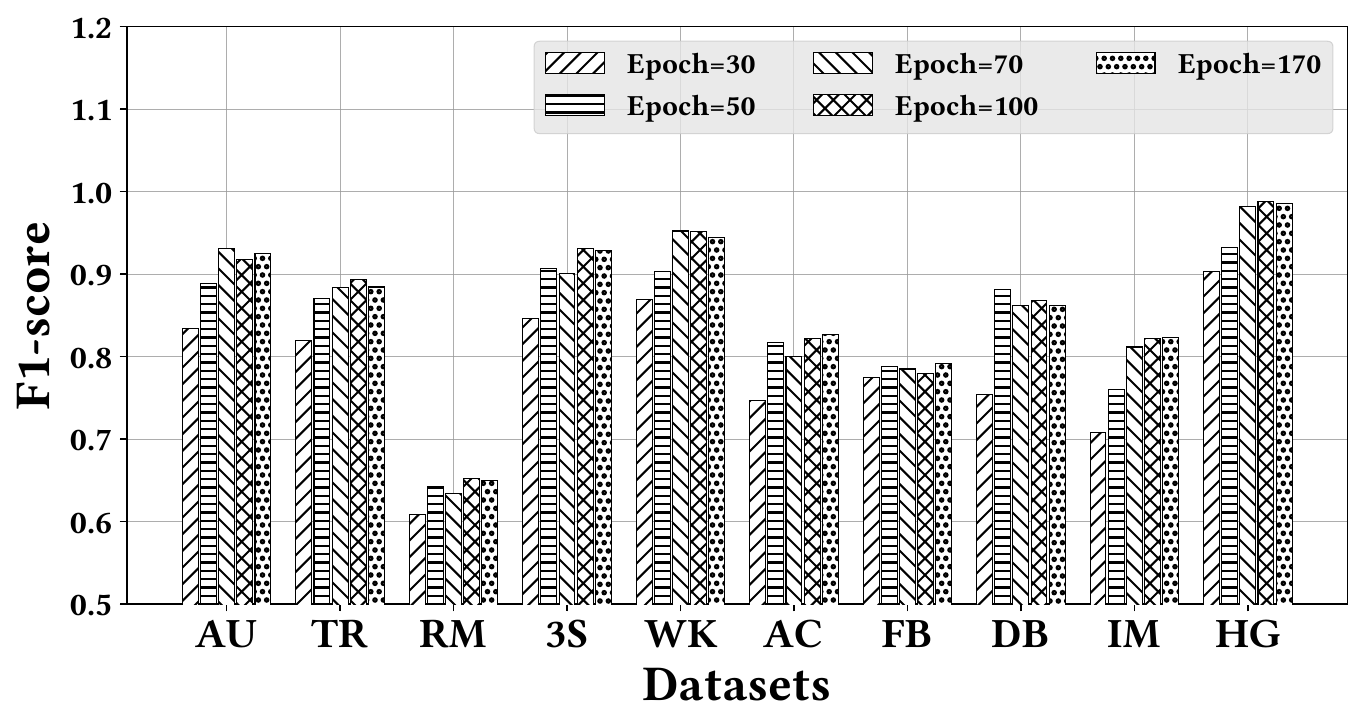} 
    \vspace{-3mm}
    }
    \subfigure[F1-score with varying hops]{
        \vspace{-5mm}
        \includegraphics[width=0.32\textwidth]{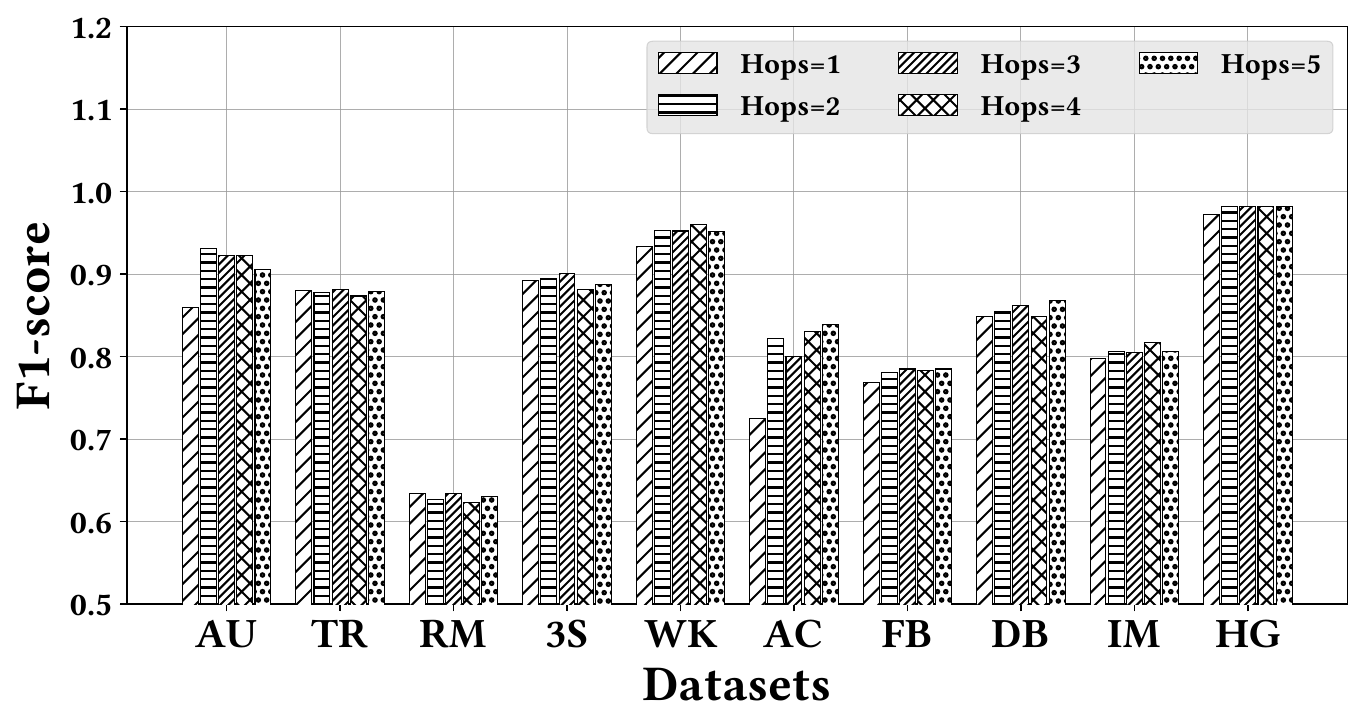} 
    \vspace{-3mm}
    }
    \subfigure[$\alpha$ and $\beta$ analysis]{ 
        \vspace{-5mm}
        \includegraphics[width=0.32\textwidth]{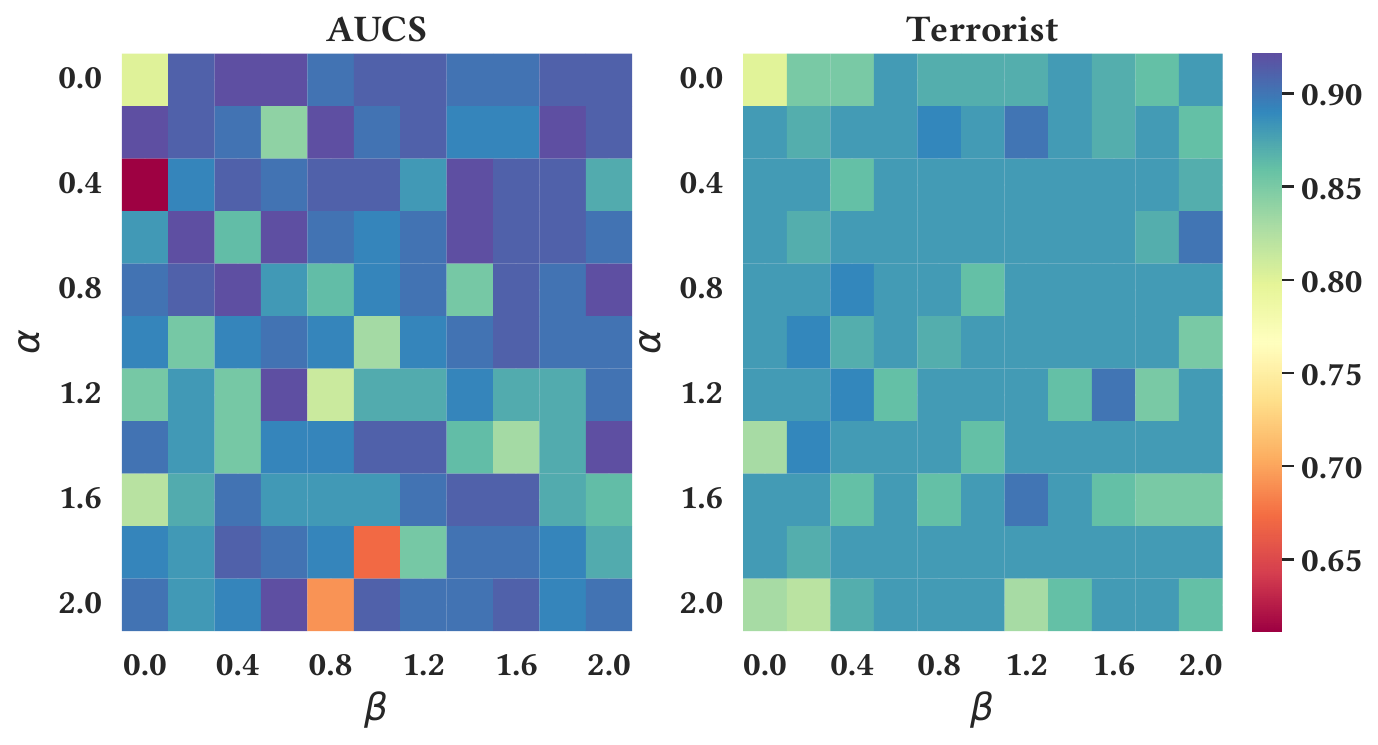}
    \vspace{-3mm}
    }
\vspace{-3mm}
    \caption{Results of hyper-parameter analysis }
    \label{fig:hyperparameter_analysis}
\vspace{-6mm}
\end{figure*}

\noindent\textbf{Implementation Details:} 
We set the hidden dimension to 512. The balancing parameter $\lambda$ is set to -1, and the granularity control parameter $\tau$ is set to 0.9 by default. \framework~ is trained for 70 epochs with early stopping. We set $\alpha$ and $\beta$, which balance the inter-layer loss and intra-layer loss, to 0.8 and 0.4, respectively. The number of neighbor hops is fixed at 3. FFNs consist of 3 layers with 1 hidden layer. Tolerance $\epsilon$ is set as 0.00001. The learning rate is initialized at 1e-4, with a peak value of 0.01. The Adam optimizer is employed with a weight decay of 1e-4. Hyperparameters for baseline models follow the settings provided in their original papers. All experiments were conducted on a server equipped with an Intel Xeon(R) Silver 4314 CPU and Nvidia A5000 GPUs.

\vspace{-3mm}
\subsection{Overall Effectiveness}
\vspace{-1mm}
\noindent\textbf{Exp-1: F1-score results.} 
In this experiment, we evaluate the effectiveness of methods, with the F1-score results summarized in Table~\ref{tab:exp1_f1score}. The baselines in inductive, transductive and hybrid settings are denoted as IN, TD and HB, respectively. Rule-based methods, RWM, TransZero and \framework~ show consistent results across all settings as they are label-free. Hence, we only present one set of results to avoid redundancy. 
The table shows that FirmTruss demonstrates outstanding performance among the rule-based methods. Additionally, CS-MLGCN performs better under the transductive setting than inductive, suggesting its limited generalization capacity for predicting unseen communities. 
The multilayer graph in the RM dataset is highly dense and contains multiple disjoint components, which can be easily captured by rule-based methods but make it challenging for \framework~to aggregate global information effectively, resulting in an unsatisfactory performance of \framework~ on RM. However, overall, \framework~significantly outperforms the baselines, achieving average F1-score improvements of 8.78\%, 19.41\% and 39.24\% compared to TransZero-CC, CS-MLGCN (TD) and FirmTruss, respectively.

\vspace{-3mm}
\subsection{Efficiency Evaluation}


\noindent\textbf{Exp-2: Efficiency results.}  
In this experiment, we evaluate the efficiency of baselines, focusing on both training and search time. The results are presented in Figure~\ref{fig:efficiency_evaluation}. Training time is reported for CS-MLGCN, TransZero, and \framework, as these methods require model training, while search time is measured for all baselines. As illustrated in the figure, \framework~delivers outstanding performance in both phases. In the training phase, \framework~is 233$\times$ faster than CS-MLGCN and 2.2$\times$ faster than TransZero. In terms of search time, although EMerge requires additional time, \framework~still delivers competitive performance compared to TransZero and CS-MLGCN. It achieves an average speedup of 118$\times$ compared to CS-MLGCN, with a maximum speedup of 452$\times$ on the ACM dataset.

\vspace{-3mm}
\subsection{Hyper-parameter Analysis}
\noindent\textbf{Exp-3: Varying $\lambda$.} In this experiment, we assess the impact of $\lambda$ which balances the layer-shared and layer-specific community scores. We test $\lambda$ values of 2, 1, 0, -1, and -2, with the results shown in Figure~\ref{fig:hyperparameter_analysis}(a). As illustrated, in datasets like ACM, performance decreases as $\lambda$ decreases, whereas in datasets like AUCS and Terrorist, performance improves with lower $\lambda$ values. Overall, setting $\lambda$ to -1 achieves strong results, with an average F1-score improvement of 3.64\% over using only the layer-shared score (i.e., $\lambda$ = 0), demonstrating the effectiveness of the layer-shared representations and layer-specific representations for multilayer CS.

\noindent\textbf{Exp-4: Varying $\tau$.} In this experiment, we evaluate the effect of $\tau$, which controls the granularity of the subgraph. We test $\tau$ with values of 0.1, 0.3, 0.5, 0.7, 0.9 and 0.95, and the results are displayed in Figure~\ref{fig:hyperparameter_analysis}(b). As shown, most datasets exhibit better performance as $\tau$ increases when it is below 0.9. However, beyond 0.9, performance decreases. Hence, we set $\tau$ to 0.9.

\noindent\textbf{Exp-5: Varying similarity definitions.} In this experiment, we evaluate \framework~ using different similarity metrics. We replace Line 4 and Line 5 in Algorithm~\ref{algo:score_computation} with L1 and L2 similarities~\cite{DBLP:journals/pacmmod/WangZW0024}, and the results are presented in Figure~\ref{fig:hyperparameter_analysis}(c). The results indicate that the performance remains consistent, demonstrating the robustness of \framework~ to varying similarity definitions.

\noindent\textbf{Exp-6: Varying epoch numbers.} In this experiment, we assess the impact of the number of epochs to train \framework. We test epochs with values of 30, 50, 70, 100 and 130, and the results are shown in Figure~\ref{fig:hyperparameter_analysis}(d). As illustrated, most datasets show improved performance as the number of epochs increases, particularly below 70. However, beyond 70, performance stabilizes. Therefore, we set the number of training epochs to 70.

\noindent\textbf{Exp-7: Varying hops.} In this experiment, we evaluate the impact of the number of hops, which provides contextual information in Eqn~\ref{eqn:hops}. We test hop values of 1, 2, 3, 4 and 5, with the results shown in Figure~\ref{fig:hyperparameter_analysis}(e). As illustrated, most datasets exhibit improved performance as the number of hops increases when the hop number is less than 3. Beyond 3 hops, performance stabilizes, indicating that 3 hops are sufficient for effective model training.

\noindent\textbf{Exp-8: Varying $\alpha$ and $\beta$.}
In this experiment, we assess the impact of $\alpha$ and $\beta$, which are used to balance the loss functions in Eqn~\ref{equ:loss_func}. We test values of both $\alpha$ and $\beta$ ranging from 0 to 2, with an interval of 0.2. The results for the AUCS and Terrorist datasets are presented in Figure~\ref{fig:hyperparameter_analysis}(f). As shown, non-zero values of $\alpha$ and $\beta$ generally yield better performance compared to zero. Furthermore, setting $\alpha$ to 0.8 and $\beta$ to 0.4 typically shows a good performance.

\vspace{-4mm}
\subsection{Ablation Study and Case Study}
\vspace{-1mm}
\noindent\textbf{Exp-9: Ablation study.}
In this experiment, we evaluate the effectiveness of components designed in \framework, including $\mathcal{L}_p$, $\mathcal{L}_{inter}$, $\mathcal{L}_{intra}$, EMerge and graph diffusion encoder. The first four components are designed to improve accuracy, and the graph diffusion encoder is designed to enhance efficiency while maintaining high accuracy. The results are summarized in Figure~\ref{fig:ablation_evaluation}. To evaluate EMerge, we replace it with the voting strategy. The results show that all five components contribute to the high performance of \framework, with $\mathcal{L}_p$, $\mathcal{L}_{inter}$, $\mathcal{L}_{intra}$ and EMerge improving the average F1-score by 1.12\%, 2.13\%, 2.78\% and 4.65\%, respectively. To assess the efficiency of graph diffusion encoder, we replace it with the multilayer GCN used in CS-MLGCN, and the results show that graph diffusion encoder achieves an average speedup of 144$\times$. These results collectively highlight the effectiveness of the proposed components. 

\begin{figure*}
\subfigcapskip=-7pt 
    
    \subfigure[F1-score results of the training phase]{ 
        
        \includegraphics[width=0.485\textwidth]{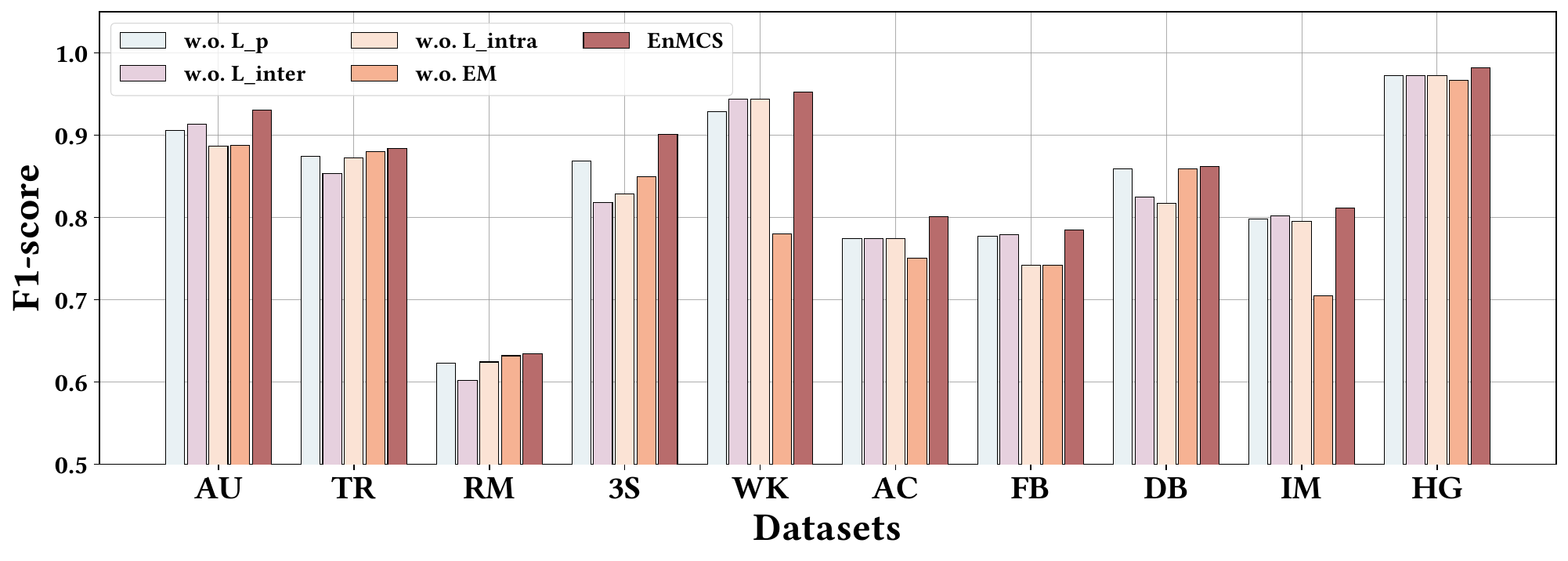}
    }
    \subfigure[Efficiency results of the search phase]{ 
        \includegraphics[width=0.485\textwidth]{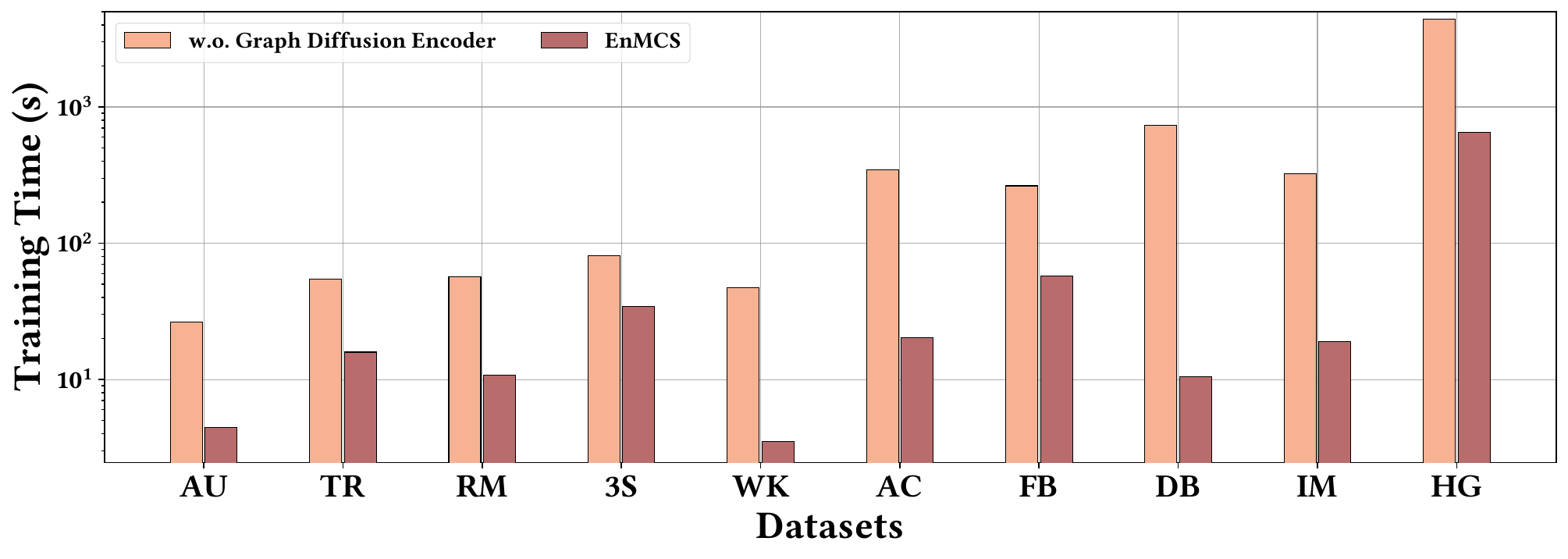}
    }
\vspace{-5mm}
    \caption{{Ablation study}}
    \label{fig:ablation_evaluation}
\vspace{-4mm}
\end{figure*}
\begin{figure*}
\vspace{-2mm}
    \subfigcapskip=-4pt 
    \subfigure[\textbf{Gound-truth}]{ 
        \centering
        \includegraphics[width=0.20\textwidth]{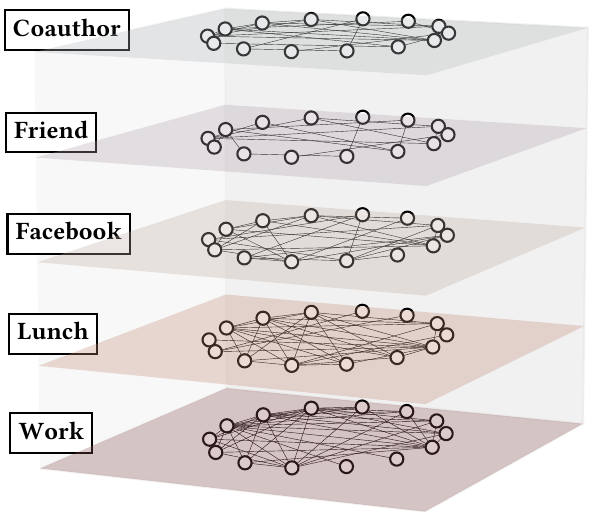}
        \captionsetup{skip=10pt}
      }
    \hfill
    \subfigure[\textbf{\framework}]{
        \centering
        \includegraphics[width=0.20\textwidth]{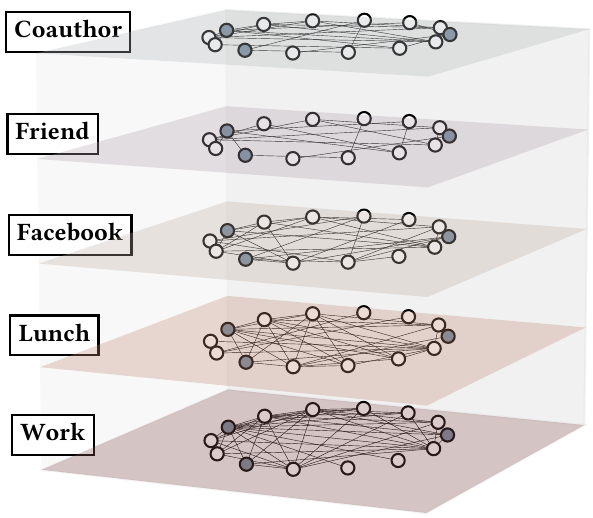} 
    }
    \hfill
    \subfigure[\textbf{TransZero-CC}]{
        \centering
        \includegraphics[width=0.20\textwidth]{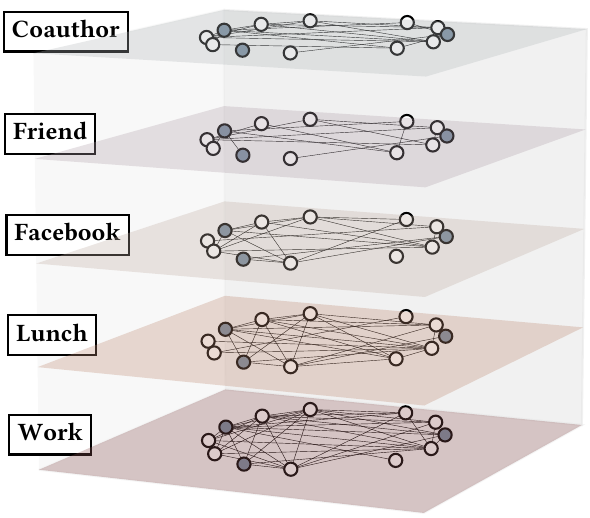} 
    }
    \hfill
    \subfigure[\textbf{CS-MLGCN}]{
        \centering
        \includegraphics[width=0.20\textwidth]{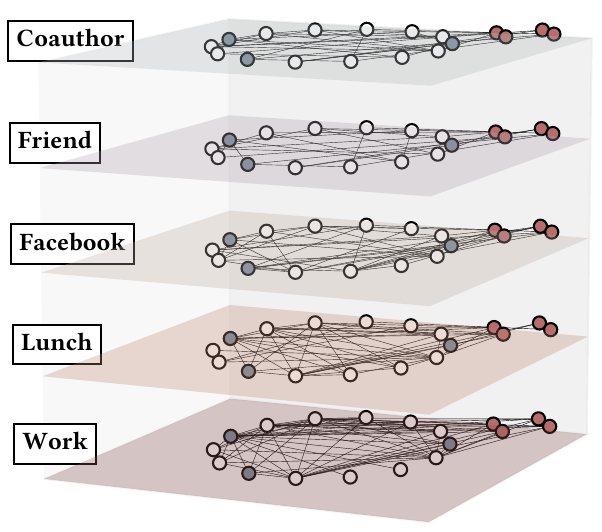} 
    }
    \vspace{-6mm}
    \caption{Case study}
    \label{fig:case_study}
\vspace{-5mm}
\end{figure*}
{\color{black}
\noindent\textbf{Exp-10: Case study}. We conducted a case study using the real-world AUCS multilayer graph~\cite{kim2016differential}, which consists of five layers (dimensions) of relationships: Coauthor, Friend, Facebook, Lunch and Work. Both the ground-truth community and the results obtained from learning-based methods are depicted in Figure~\ref{fig:case_study}, with the query node highlighted in blue. The illustration reveals that the TransZero-CC method fails to include some promising nodes, while CS-MLGCN includes several irrelevant nodes. On the other hand, the community identified by our proposed \framework~accurately matches the ground-truth community.

}

\vspace{-3mm}
\section{Related Work}
\vspace{-1mm}
\label{sec:Relatedwork}
In this section, we review the relevant literature, focusing on two closely related areas: multilayer CS and single-layer CS.

\noindent\textbf{Multilayer Community Search}.
A suite of multilayer CS methods evolved from single-layer models, has been introduced to more effectively address the intricacies of inter-layer correlations. FirmCore~\cite{hashemi2022firmcore} and FocusCore~\cite{wang2024focuscore} extend the $k$-core concept from single-layer to multilayer graphs. FirmCore defines a multilayer community as at least $\lambda$ layers that satisfy the core number constraints. FocusCore further refines this definition by requiring that the $\lambda$ layers specifically include a set of predefined layers (focus layers), thereby increasing the specificity of the search process. Similarly, FirmTruss~\cite{akbas2017truss} extends the $k$-truss structure to multilayer settings, requiring that at least $\lambda$ layers meet the truss number constraints to ensure structural consistency across layers.
ML-LCD~\cite{interdonato2017local} employs a greedy algorithm to identify local communities within multiplex networks.
Additionally, RWM~\cite{luo2020local} leverages a random walk approach, designing a probability transition mechanism tailored for multilayer community search. 
CS-MLGCN~\cite{behrouz2022cs} designs a learning model that can predict the membership of each node and utilizes labeled communities to train the model.
However, these methods often face challenges such as the lack of layer-specific characteristics and reliance on label information, which limit their effectiveness and applicability in real-world scenarios.

\noindent\textbf{Single-layer Community Search}.
Single-layer CS can be categorized into two groups: traditional CS methods and learning-based techniques. Traditional CS methods primarily focus on searching a cohesively connected subgraph~\cite{zhao2022efficient} within a data graph, incorporating specific query nodes and adhering to predefined constraints. These methods traditionally employ cohesive subgraph models such as \textit{k}-core~\cite{cui2014local, wang2018efficient}, \textit{k}-truss~\cite{huang2014querying, akbas2017truss} and \textit{k}-edge connected component (\textit{k}-ECC)~\cite{chang2013efficiently, hu2016querying} to model communities. However, a significant drawback of these approaches is their \textit{structural inflexibility} as fixed rules are hard to capture real-world communities, which limits their applications.
Recent advances have spurred interest in learning-based CS methods, which leverage deep learning techniques to enhance the effectiveness and flexibility of CS~\cite{sima2024deep}. For instance, ICS-GNN~\cite{gao2021ics} introduces a lightweight, interactive CS model utilizing a graph neural network. Additionally, QD-GNN and AQD-GNN presented in~\cite{jiang2022query}, offer solutions for CS and attributed CS within a supervised framework. COCLEP, proposed in~\cite{li2023coclep}, facilitates CS in a semi-supervised context, requiring only a small set of labeled nodes within the community as opposed to complete labeling of all nodes in the ground-truth community. Meta-learning~\cite{vilalta2002perspective} is introduced for inductive CS in ~\cite{fang2024inductive, fang2023community}.
Moreover, TransZero~\cite{wang2024efficient} is an innovative unsupervised CS approach that works without the need of labels by integrating the offline phase and online phase. 
Despite these advancements, these methods are primarily designed for single-layer graphs and struggle to capture the inter-layer relationships which are critical for effective multilayer CS.

\vspace{-4mm}

\section{Conclusion}
\vspace{-1mm}
\label{sec:Conclusion}

In this paper, we study the problem of unsupervised multilayer community search and propose a novel ensemble-based framework, \framework, which integrates \singlesearchname~and \combinename. \singlesearchname~is designed to search communities in each layer while considering both inter- and intra-layer information. It leverages a graph-diffusion model to learn common and private representations, guided by three label-free losses: proximity loss, matching loss, and correlation loss. Communities are then identified in each layer based on similarity. \combinename~merges the communities from each layer into a final consensus community without requiring labels. Using an EM-based algorithm, it simultaneously estimates both true node assignments and layer error rates, optimizing these through iterative maximum likelihood estimation. This "search and merge" approach offers flexibility in updating components and provides a new perspective for generalizing single-layer CS methods to the multilayer setting. 
Extensive experiments on 10 real-world multilayer graphs demonstrate the superior accuracy and competitive efficiency of \framework~compared to previous multilayer CS methods.

\balance
{
\bibliographystyle{ACM-Reference-Format}
\bibliography{sample}
}

\end{document}